\documentclass[acmsmall,nonacm]{acmart}

\usepackage{algorithm}
\usepackage[noend]{algpseudocode}
\usepackage{wrapfig}
\usepackage{thm-restate}

\newtheorem{theorem}{Theorem}
\newtheorem{definition}[theorem]{Definition}
\newtheorem{corollary}[theorem]{Corollary}

\newtheorem{lemma}[theorem]{Lemma}

\title{Strong Linearizability using Primitives with Consensus Number 2}

\author{Hagit Attiya}
\orcid{0000-0002-8017-6457}
\affiliation{
\institution{Technion}
\department{Department of Computer Science}
\country{Israel}
}
\email{hagit@cs.technion.ac.il}

\author{Armando Castañeda}
\orcid{0000-0002-8017-8639}
\affiliation{
\institution{Univesidad Nacional Aut\'onoma de M\'exico}
\department{Instituto de Matem\'aticas}
\country{Mexico}
}
\email{armando.castaneda@im.unam.mx}

\author{Constantin Enea}
\orcid{0000-0003-2727-8865}
\affiliation{%
\institution{LIX, Ecole Polytechnique, CNRS and Institut Polytechnique de Paris}
\country{France}
}
\email{cenea@lix.polytechnique.fr}

\begin{document}

\begin{abstract}
A powerful tool for designing complex concurrent programs is through
composition with \emph{object implementations} from lower-level primitives.
\emph{Strongly-linearizable} implementations allow to
preserve \emph{hyper-properties}, e.g.,
probabilistic guarantees of randomized programs.
However, the only known wait-free strongly-linearizable implementations
for many objects rely on {\sf compare\&swap},
a universal primitive that allows any number of processes to solve consensus.
This is despite the fact that these objects
have wait-free linearizable implementations from
read / write primitives, which do not support consensus.
This paper investigates a middle-ground,
asking whether there are wait-free strongly-linearizable implementations
from realistic primitives such as {\sf test\&set} or {\sf fetch\&add},
whose consensus number is 2.

We show that many objects with consensus number 1
have wait-free strongly-linearizable implementations
from {\sf fetch\&add}.
We also show that several objects with consensus number 2
have wait-free or lock-free implementations
from other objects with consensus number 2.
In contrast, we prove that even when {\sf fetch\&add}, swap and {\sf test\&set}
primitives are used, some objects with consensus number 2
do not have lock-free strongly-linearizable implementations.
This includes queues and stacks, as well as relaxed variants thereof.
\end{abstract}

\maketitle

\section{Introduction}

A key way to construct complex distributed systems is through modular
composition of linearizable concurrent objects~\cite{HerlihyW1990}.
Yet linearizable objects do not always compose correctly
with randomized programs~\cite{HadzilacosHT2020,GolabHW2011},
or with programs that should not leak information~\cite{AttiyaE2019}.
This deficiency is addressed by \emph{strong linearizability}~\cite{GolabHW2011},
a restriction of linearizability,
which ensures that properties holding when a concurrent program
is executed in conjunction with an atomic object,
continue to hold when the program is executed
with a strongly-linearizable implementation of the object.

More generally,
strong linearizability was shown~\cite{AttiyaE2019,DongolSW2023}
to preserve \emph{hyperproperties}~\cite{ClarksonS2010},
such as security properties and
probability distributions of reaching particular program states.
This made strongly-linearizable concurrent objects very sought after.

Unfortunately, the only known \emph{wait-free} strongly-linearizable
implementations, in which every operation completes, use
primitives such as {\sf compare\&swap}~\cite{GolabHW2011,HwangW2021}.
These primitives have infinite \emph{consensus number}, in the sense
that they allow to solve consensus for any number of processes.\footnote{
    See a formal definition of the \emph{consensus number}
    in Section~\ref{sec:preliminaries}.}
This makes them \emph{universal} as they can be used
to implement virtually any shared object~\cite{H91}.

Weaker primitives, with consensus number 1, e.g., read and write,
do not admit strongly-linearizable implementations:
Many frequently-used concurrent objects that have wait-free linearizable
implementations from common read and write primitives, are known
not to have analogous strongly-linearizable implementations.
For example, max registers, snapshots, or monotonic counters
do not have wait-free strongly-linearizable implementations,
even with multi-writer registers~\cite{DenysyukW2015}.
Single-writer registers do not suffice even for \emph{lock-free}
strongly-linearizable multi-writer registers, max registers,
snapshots, or counters~\cite{HelmiHW2012}.

In between universal primitives, like {\sf compare\&swap},
and weak primitives with consensus number 1,
like {\sf read} and {\sf write},
there are primitives, like {\sf test\&set}, {\sf fetch\&add}
and {\sf swap}, whose consensus number is 2.
These are realistic primitives, provided in many architectures.
This paper seeks a middle ground, investigating whether primitives
with consensus number 2 allow to obtain wait-free,
or at least lock-free,
strongly-linearizable implementations.

Our first set of results show that many objects with consensus number 1
have wait-free strongly-linearizable implementations,
when {\sf fetch\&add} can be used (see Section~\ref{sec:algs cons 1}).
Our construction goes through a novel, but simple, implementation
of wait-free strongly-linearizable atomic snapshots.

\begin{wrapfigure}{r}{0.5\textwidth}
\centering
\includegraphics[scale=0.7]{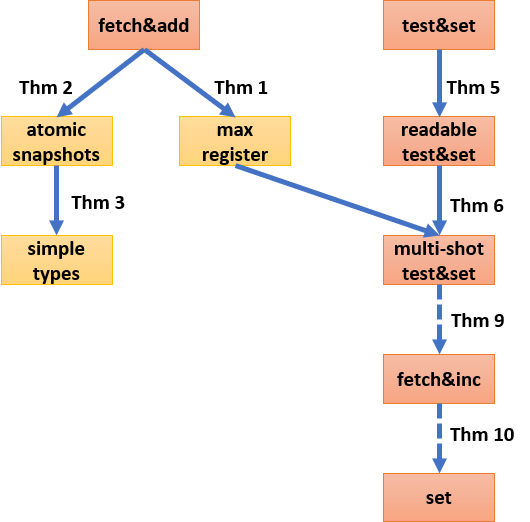}
\caption{Summary of our constructions:
objects with consensus number 2 appear in orange while
objects with consensus number 1 appear in yellow;
solid arrows indicate wait-free implementations,
while dashed arrows indicate lock-free implementations.}
\label{figure:implementations}
\end{wrapfigure}

With strongly-linearizable atomic snapshots at our disposal,
we follow the scheme of Aspnes and Herlihy~\cite{AspnesH1990wait}
and use atomic snapshot to implement \emph{simple types},
where all operations either commute (their order is immaterial)
or one of them overwrites the other (it is immaterial whether
the first operation is executed immediately before the second operation,
or not).
For example, max registers~\cite{AspnesAC2009},
in which a {\sf ReadMax} returns the largest value previously written by a {\sf WriteMax}.
{\sf ReadMax} operations commute,
{\sf WriteMax} operations overwrite {\sf ReadMax} operations,
and {\sf WriteMax}($v_1$) overwrites {\sf WriteMax}($v_2$),
if $v_1 \geq v_2$.
Counters, logical clocks and certain set objects are also simple types.
Ovens and Woelfel~\cite{OvensW19} have shown that using
strongly-linearizable snapshots make this implementation
also strongly linearizable.
We provide a simple and elegant proof of this fact,
relying on a forward simulation argument,
which is sufficient for strong linearizability~\cite{AttiyaE2019}.

We next consider objects with consensus number 2,
and show that some of them have wait-free or lock-free implementations
from other objects with consensus number 2 (see Section~\ref{sec:algs cons 2}).
We show that {\sf test\&set} can wait-free strongly-linearizable implement
readable {\sf test\&set}, i.e., the usual {\sf test\&set} enriched with a read operation.
Moreover, it can lock-free implement multi-shot {\sf test\&set}, which provides a reset operation,
and if in addition one considers max registers, wait-freedom is possible.
Also, we show lock-free strongly-linearizable implementations of fecth\&increment and sets
using only {\sf test\&set}. These constructions appear in Section~\ref{sec:algs cons 2}.
Figure~\ref{figure:implementations} summarizes our positive results.

In contrast, we prove that even when {\sf fetch\&add}, {\sf swap} and {\sf test\&set}
primitives are used, some objects with consensus number 2
do not have lock-free strongly-linearizable
implementations.
This includes objects like queues and stacks,
as well as several relaxed variants thereof
(see Section~\ref{sec:imposs cons 2}).
The proof goes through showing a connection between lock-free strongly-linearizable
implementations of these objects and $k$-set agreement: a single instance
of any such implementations allows $n$-processes to solve $k$-set agreement.
For $n \geq 3$ and $k=1$,
consensus is known to be impossible using objects with consensus number 2~\cite{H91},
which prevents strongly-linearizable implementations with such base objects.
Similarly, whenever $n > 2k$, there is no $k$-set agreement algorithm for $n$ processes using
$2$-process {\sf test\&set}~\cite{HerlihyR94},
and hence in these cases there is no strongly-linearizable implementation using only this base object.
The connection between lock-free strong linearizable implementations and $k$-set agreement
we present here, is motivated by the simulation in~\cite{AttiyaCH2018} that solves
consensus from any lock-free linearizable implementation of a queue with \emph{universal helping},
which, roughly speaking, means that eventually every operation, complete or pending, is linearized.

\subsubsection*{Additional Related Work.}
\label{section:related}

Golab, Higham and Woelfel~\cite{GolabHW2011} were the first
to recognize the problem when linearizable
objects are used with randomized programs, via an
example using the snapshot object implementation of~\cite{AADGMS93}.
They proposed \emph{strong linearizability}
as a way to overcome the increased vulnerability
of programs using linearizable implementations to strong adversaries,
by requiring that the linearization order of operations
at any point in time be consistent with the linearization order
of each prefix of the execution.
Thus, strongly-linearizable implementations limit the adversary's
ability to gain additional power by manipulating
the order of internal steps of different processes.
Consequently, properties holding when a
concurrent program is executed with an atomic object,
continue to hold when the program is executed with a strongly-linearizable
implementation of the object (see~\cite{AttiyaE2019,DongolSW2023}).

Most prior work on strong linearizability focused on implementations
using shared objects, and considered various progress properties.
The exception are~\cite{AttiyaEW2021,ChanHHT2021},
who studied message-passing implementations.

If one only requires \emph{obstruction-freedom}, which ensures an operation
complete only if it executes alone, then any object can be implemented
using single-writer registers~\cite{HelmiHW2012}.

When considering the stronger property of \emph{lock-freedom},
which requires that as long as some operation is pending, some
operation completes, single-writer registers are not sufficient
for implementing multi-writer registers, max registers, snapshots,
or counters~\cite{HelmiHW2012}.
If the implementations can use multi-writer registers, though, it
is possible to get lock-free implementations of max registers,
snapshots, and monotonic counters~\cite{DenysyukW2015}, as well as of
objects whose operations commute or overwrite~\cite{OvensW19}.
It was also shown~\cite{AttiyaCH2018} that there is no
lock-free implementation of a queue or a stack with universal helping,
from objects whose readable versions have consensus number less than the
number of processes, e.g., readable {\sf test\&set}.

For the even stronger property of {\em wait-freedom}, which requires every
operation to complete, it is possible to implement bounded max registers
using multi-writer registers~\cite{HelmiHW2012}, but it is impossible
to implement max registers, snapshots, or monotonic
counters~\cite{DenysyukW2015} even with multi-writer registers.
The bottom line is that the only known strongly-linearizable wait-free
implementation is of a bounded max register (using multi-writer registers),
while many impossibility results are known.

There are lock-free or wait-free implementations
of objects with consensus number 2 from objects at the same level of the consensus
hierarchy~\cite{AfekWW1993, Afek1999W, AGM07, AfekMW2011, CastanedaRR23, Li01}.
Attiya and Enea~\cite{AttiyaE2019} have already shown, by example,
that the wait-free stack implementation~\cite{AGM07} is not strongly linearizable,
but it was not discussed whether the other implementations
are strongly linearizable or not.
Our impossibility result implies that the lock-free queue and wait-free stack implementations in~\cite{AGM07, Li01},
based on {\sf test\&set}, {\sf swap} and {\sf fetch\&add}, are not strongly linearizable,
and neither are the read/write lock-free and wait-free (relaxed)
queue and stack implementations with multiplicity in~\cite{CastanedaRR23}.

The wait-free one-shot {\sf fetch\&increment} using {\sf test\&set}
in~\cite{AfekWW1993, Afek1999W} is strongly linearizable;
our lock-free {\sf fetch\&increment} strongly-linearizable implementation
is a straightforward generalization of that implementation.
In contrast, the wait-free multi-shot {\sf fetch\&increment} implementation
in the same work is not strongly linearizable as
it has executions in which an operation returns $x$,
and pending operations certainly will return a value $< x$ but it is
not decided the actual value yet, and the return value depends on the extension.
A similar situation happens in some executions of the wait-free
multi-shot {\sf fetch\&add} implementation in~\cite{AfekWW1993, Afek1999W}
and the wait-free {\sf swap} implementation in~\cite{AfekMW2011}.

\section{Preliminaries}
\label{sec:preliminaries}

We consider a standard shared memory system with $n$ asynchronous processes, $p_0,\ldots,p_{n-1}$,
which may crash at any time during an execution.
Processes communicate with each other by applying \emph{atomic}
operations to shared \emph{base objects}.

A \emph{(high-level) concurrent object}
is defined by a state machine consisting of a set of states,
a set of operations, and a set of transitions between states.
Such a specification is known as \emph{sequential}.
An \emph{implementation} of an object $T$ is a distributed
algorithm $\mathcal{A}$ consisting of a local state machine
$\mathcal{A}_p$, for each process $p$.
$\mathcal{A}_p$ specifies which operations of base objects $p$ applies
and which local computations $p$ performs in order to return a response
when it invokes an operation of $T$.
Each of these base object operation invocations and local computations is a \emph{step}.
For the rest of this section,
fix an implementation $\mathcal{A}$ of an object $T$.

A \emph{configuration} $C$ of the system contains the states of
all shared base objects and processes.
In an \emph{initial} configuration,
base objects and processes are in their initial states.
Given a configuration $C$ and a process $p$, $p(C)$ is the configuration
after $p$ takes its next step in $C$.
Moreover, $p^{0}(C) = C$ and for every $n\in \mathbb{N}$,
$p^{n+1}(C) = p(p^{n}(C))$.
Note that the next step $p$ takes in $C$ depends only on
its local state in $C$.

An \emph{execution of $\mathcal{A}$ starting from $C_0$}
is a (possibly infinite) sequence
$C_0 e_1 C_1 e_2 C_2 \cdots$, where each $e_i$ is a step of a process,
or an invocation/response of a high-level operation by a process
and if $e_i$ is a step, then $C_{i+1} = e_i(C_i)$;
furthermore, the sequence satisfies the following properties:
\begin{enumerate}
\item Each process can invoke a new (high-level) operation only
 when its previous operation (if there is any) has a corresponding response,
 i.e., executions are \emph{well-formed}.
\item A process takes steps only between an invocation and a response.
\item For any invocation of process $p$,
  the steps of $p$ between that invocation and the following response of $p$,
  if there is one,
  correspond to steps of $p$ that are specified by $\mathcal{A}_p$.
\end{enumerate}

An execution $\beta$ is an \emph{extension} of a finite execution $\alpha$
if $\alpha$ is a prefix of $\beta$.
A configuration $C$ is \emph{reachable} if there is a finite execution
$\alpha$ starting from an initial configuration whose last configuration is $C$;
we say that $\alpha$ \emph{ends} with $C$.
A configuration $C'$ is \emph{reachable} from a configuration $C$
if there is a finite execution starting from $C$ that ends with $C'$.
Two configurations $C$ and $C'$ are \emph{indistinguishable} to process $p$
if the state of every base object and the state of $p$
are the same in $C$ and $C'$.

An operation in an execution is \emph{complete} if both its invocation
and response appear in the execution.
An operation is \emph{pending} if only its invocation appears in the execution.
An implementation is \emph{wait-free} if every process completes each of
its operations in a finite number of its steps.
Formally, if a process executes infinitely many steps in an execution,
it completes infinitely many operations.
An implementation is \emph{lock-free} if whenever processes
execute steps, at least one of the operations terminates.
Formally, in every infinite execution,
infinitely many operations are complete.
Thus, a wait-free implementation is lock-free
but not necessarily vice versa.

In an execution, an operation $OP$ \emph{precedes}
another operation $OP'$ if
the response of $OP$ appears before the invocation of $OP'$.
$OP$ and $OP'$ are \emph{overlapping} if neither $OP$ precedes $OP'$
nor $OP'$ precedes $OP$.
$OP$ \emph{does not precede} $OP'$ if
either $OP$ and $OP'$ are overlapping or $OP'$ precedes $OP$.

\emph{Linearizability}~\cite{HerlihyW1990} is the standard notion used
to identify a correct implementation.
Roughly speaking, an implementation is linearizable if each operation
appears to take effect \emph{atomically} at some time between its invocation
and response, hence operations' real-time order is maintained.
Formally, let $\mathcal{A}$ be an implementation of an object $T$.
An execution $\alpha$ of $\mathcal{A}$ is \emph{linearizable} if
there is a sequential execution $S$ of $T$
(i.e., a sequence of matching invocation-response pairs, starting with an invocation)
such that:
\begin{enumerate}
\item
$S$ contains every complete operation in $\alpha$ and
some of the pending operations in $\alpha$.
Hence, the output values in the matching responses of an invocation in
$S$ and the complete operations in $\alpha$ are the same.
\item
If $OP$ precedes $OP'$ in $\alpha$,
then $OP$ precedes $OP'$ in $S$;
namely, $S$ respects the \emph{real-time} order in $\alpha$.
\end{enumerate}
$\mathcal{A}$ is \emph{linearizable} if all its executions are linearizable.

Roughly speaking, an implementation of a data type is \emph{strongly
linearizable}~\cite{GolabHW2011} if once an operation is linearized,
its linearization order cannot be changed in the future.
More specifically,
there is a function $L$ mapping each execution to a linearization,
and the function is \emph{prefix-closed}:
for every two executions $\alpha$ and $\beta$,
if $\alpha$ is a prefix of $\beta$, then $L(\alpha)$ is a prefix of $L(\beta)$.

A \emph{$k$-set agreement} object, $1 \leq k \leq n$, provides a single operation,
called ${\sf decide(\cdot)}$;
each process can invoke  ${\sf decide(\cdot)}$ once,
with its \emph{proposal} to the consensus as input.
The processes obtain a value from the object, called \emph{the decision},
so the following properties are satisfied in every execution:
\begin{description}

\item[Termination.] Every correct process decides after a finite number of steps.

\item[Validity.] Processes decide only on proposed values
    (i.e., an input to ${\sf decide(\cdot)}$).

\item[$k$-Agreement.] Processes decide on at most $k$ distinct values.

\end{description}

The well-known consensus problems is $1$-set agreement.
Consensus is \emph{universal}~\cite{H91} in the sense that
objects solving consensus among $n$ processes,
together with read/write registers,
provide a wait-free $n$-process linearizable implementation
of any concurrent object with a sequential specification.
The \emph{consensus number} of a shared object~\cite{H91}
is the maximum number $n$
such that it is possible to solve consensus among $n$ processes
from read/write registers and instances of the object.

\section{Strongly-linearizable implementations of objects with consensus number 1}
\label{sec:algs cons 1}

We describe a recipe for strongly-linearizable (and wait-free)
implementation of objects with consensus number 1, using {\sf fetch\&add},
an object with consensus number 2.
As a warm up, we illustrate the idea with a max register implementation,
and then show how it can be used to implement atomic snapshots.
Finally, we use atomic snapshots to obtain a general implementation of
\emph{simple types}~\cite{OvensW19,AspnesH1990wait},
which include counters, logical clocks and set objects.

\subsection{Max register}

To illustrate the idea, we first describe how to implement a
wait-free strongly-linearizable max register, using {\sf fetch\&add}.
Recall that a max register provides two operations
{\sf WriteMax}($v$) and {\sf ReadMax}() returning a value.
Its sequential specification is that a {\sf ReadMax} returns the
largest value previously written.

The key idea of the implementation is to pack into a single register $R$
an array containing the largest value written by each process.
A natural idea would be to give each process a consecutive set of bits from $R$,
i.e., $p_0$ gets bits $0,...,d-1$, $p_1$ gets bits $d,...,2d-1$, etc.
To modify its value, say to increment by 1,
$p_i$ would {\sf fetch\&add} an appropriate value.
The problem is that this bounds the values that can be written
by each process by $2^d$.

Instead, the idea is to \emph{interleave} the bits of the processes;
this representation was used in a recoverable implementation of
{\sf fetch\&add}~\cite{NahumABH2021}.
Specifically, $p_0$ stores its value in bits $0,n,2n,3n,...,$,
$p_1$ gets bits $1,n+1,2n+1,3n+1$,... etc.
This allows each process to store unbounded values.
Note that $R$ stores unbounded values, as well.

In the warm up example, we assume that each process stores
the largest value it has written so far \emph{in unary}.
The process holds its largest previous value
in a local variable {\it prevLocalMax}.
Here is {\sf MaxWrite}($R$,$K$) for $p_i$:
\begin{enumerate}
\item If $K$ is smaller than or equal to {\it prevLocalMax},
then {\sf fetch\&add}($R$,0) and return.
This {\sf fetch\&add} is not needed for correctness,
but it simplifies the linearization proof.

\item Set $p_i$'s bits ${\mathit prevLocalMax}+1,...,K$ to 1 by
{\sf fetch\&add} the appropriate number to $R$.
For example, if $K={\mathit prevLocalMax}+1$,
apply {\sf fetch\&add}($R$,$2^{Kn+i}$).

\item Set {\it prevLocalMax} to $K$.
\end{enumerate}

In {\sf ReadMax}, just read $R$ using {\sf fetch\&add}($R$,0),
reconstruct the individual maximums and return the largest one.

The linearization point of each operation is at its {\sf fetch\&add} operation,
immediately implying the implementation is strongly linearizable.
It is simple to obtain the following result:

\begin{theorem}
\label{thm-from-fetch-n-add-to-max-register}
There is a wait-free strongly-linearizable implementation
of a max register using fetch\&add.
\end{theorem}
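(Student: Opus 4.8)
The plan is to verify that the construction described in the text actually implements a max register correctly, and then observe that strong linearizability follows immediately from the structure of the algorithm. Since the theorem statement is quite simple and the construction has already been laid out in prose, the proof amounts to confirming three things: wait-freedom, linearizability (correctness of return values with respect to real-time order), and the prefix-closed property that upgrades linearizability to strong linearizability.

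First I would establish wait-freedom, which is the easiest part: every {\sf MaxWrite} and {\sf ReadMax} operation performs a bounded number of steps (a constant number of {\sf fetch\&add} operations plus local computation to encode/decode the interleaved bit representation), so each operation completes in finitely many steps regardless of concurrency. Next I would argue linearizability by designating the linearization point of each operation to be its single {\sf fetch\&add} on $R$ --- for {\sf MaxWrite} this is the step in case~(1) or case~(2), and for {\sf ReadMax} it is the {\sf fetch\&add}($R$,0). Because these {\sf fetch\&add} primitives are themselves atomic, they impose a total order on all operations consistent with real-time order. I would then verify the return values: each process stores its own running maximum in unary in its interleaved bit positions, so the value of $R$ at any point encodes precisely the per-process maxima of all {\sf MaxWrite} operations linearized so far. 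A {\sf ReadMax} that reads $R$ therefore reconstructs and returns the largest value written by any {\sf MaxWrite} whose {\sf fetch\&add} preceded it, which is exactly the sequential specification.

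The key observation for strong linearizability is that the linearization function $L$ is determined entirely by the order in which processes execute their {\sf fetch\&add} steps on $R$. Since each operation is linearized at a single, concrete step that appears in the execution, extending an execution $\alpha$ to a longer execution $\beta$ can only append new operations to the linearization order; it can never reorder or remove operations already linearized in $L(\alpha)$. Formally, the linearization point of an operation is fixed the moment its {\sf fetch\&add} occurs, so $L(\alpha)$ is a prefix of $L(\beta)$ whenever $\alpha$ is a prefix of $\beta$. This is precisely the prefix-closed condition defining strong linearizability.

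The main obstacle, and the only place requiring genuine care, is confirming that the interleaved unary encoding is manipulated correctly: specifically that the {\sf fetch\&add} values chosen in case~(2) set exactly the intended bits ${\mathit prevLocalMax}+1,\ldots,K$ of process $p_i$ without overflow into another process's bit positions, and that decoding recovers each per-process maximum unambiguously. Because each process writes only to its own arithmetic progression of bit positions (offset $i$, stride $n$) and writes monotonically increasing unary prefixes, distinct processes' {\sf fetch\&add} operations never carry into one another, so the sum $R$ cleanly superimposes the independent unary encodings. Once this encoding invariant is stated and checked, the remaining arguments are routine, and the theorem follows.
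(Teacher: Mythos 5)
Your proposal is correct and follows essentially the same route as the paper, which itself only sketches the argument: linearization points are placed at each operation's single atomic {\sf fetch\&add} on $R$, and prefix-closedness is immediate because these points are fixed steps of the execution. Your additional check that the interleaved unary encodings of distinct processes never interfere (no carries across bit positions) is a reasonable elaboration of a detail the paper leaves implicit.
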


\subsection{Atomic snapshots}

Next, we extend this idea to implement $n$-component (single-writer)
atomic snapshots; each component belongs to a single process,
and is initially $0$.
A \emph{view} is an $n$-component vector, each holding the value of a process.
An \emph{atomic snapshot} object~\cite{AADGMS93} provides two operations:
{\sf update} that modifies the process's component,
and {\sf scan} that returns a view.
The sequential specification of this object requires the view returned by
a {\sf scan} to include, in each component,
the value of the latest {\sf update} to this component.

In the implementation,
similar to the interleaving construction used for the max register,
a register $R$ holds a view so that the value of $p_i$ (the $i$th component)
is stored in bits $i,n+i,2n+i,3n+i,\ldots$.
In addition, each process has a local variable {\it prevVal},
which holds the current value stored in its component,
i.e., the value written in its previous update
(or the initial value, if no such update exists).

An {\sf update}($v$) operation by $p_i$ proceeds as follows.
\begin{enumerate}
\item If $v$ = {\it prevVal}, then {\sf fetch\&add}($R$,0) and return.

\item
Let $i_1,\ldots$ be the bits of $v$ that are 1,
and the bits of {\it prevVal} are 0 (these bits have to be set);
let $j_1,\ldots$ be the bits of $v$ that are 0,
and the bits of {\it prevVal} are 1 (these bits have to be unset).
Let
\[ {\mathit posAdj} = 2^{i_1 n + i}+2^{i_2 n + i}+\ldots \]
\[ {\mathit negAdj} = 2^{j_1 n + i}+2^{j_2 n + i}+\ldots \]

\item $p_i$ calls {\sf fetch\&add}($R$,$\mathit{posAdj}-\mathit{negAdj}$).

\item $p_i$ stores $v$ in {\it prevVal}.
\end{enumerate}

A {\sf scan} operation reads $R$ using {\sf fetch\&add}($R$,0),
reconstructs the view stored in it,
and returns it.

The linearization point of each operation is at its {\sf fetch\&add} operation,
immediately implying the implementation is strongly linearizable.
It is simple to obtain the following result:

\begin{theorem}
\label{thm-from-fetch-n-add-to-snapshots}
There is a wait-free strongly-linearizable implementation
of atomic snapshot using fetch\&add.
\end{theorem}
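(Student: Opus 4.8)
The plan is to establish wait-freedom and then exhibit an explicit prefix-closed linearization function, taking the single {\sf fetch\&add} step of each operation as its linearization point. Wait-freedom is immediate: every {\sf update} and every {\sf scan} performs a bounded number of local computations together with exactly one {\sf fetch\&add} on $R$, so each operation completes within a finite number of its own steps regardless of the behavior of other processes.

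First I would verify that the interleaved-bit representation faithfully encodes the view. The crucial point is that the bits of $p_i$ occupy positions $i, n+i, 2n+i, \ldots$, and at these positions the contribution of every other component is $0$. Reasoning bit-by-bit from the least significant bit, adding ${\mathit posAdj} - {\mathit negAdj}$ changes exactly the bits of $p_i$: each summand of ${\mathit posAdj}$ sets a position whose current bit is $0$ (so $0+1=1$ with no carry out), and each summand of ${\mathit negAdj}$ clears a position whose current bit is $1$ (so $1-1=0$ with no borrow out). Since all intervening positions belong to other processes and are $0$ within $p_i$'s sub-encoding, no carry or borrow can propagate out of $p_i$'s bits into another component. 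This is the main obstacle in the proof: one must rule out carry/borrow interference, and here it follows from the invariant that each component's value lives on a disjoint, spaced-out set of bit positions that only its owner ever modifies. Consequently, after the {\sf fetch\&add} the $i$th component of $R$ encodes exactly $v$ while all other components are unchanged, and a {\sf scan}'s {\sf fetch\&add}($R,0$) reads an integer that decodes to a well-defined view.

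Next I would define the linearization function $L$: for an execution $\alpha$, let $L(\alpha)$ list all operations whose {\sf fetch\&add} step has already occurred in $\alpha$, ordered by the position of that step (these steps are totally ordered, since {\sf fetch\&add} operations on $R$ are atomic and hence serialized). I would then check the two linearizability conditions. For real-time order, if $OP$ precedes $OP'$ in $\alpha$ then the response of $OP$, and hence its {\sf fetch\&add}, occurs before the invocation of $OP'$, and hence before $OP'$'s {\sf fetch\&add}, so $OP$ precedes $OP'$ in $L(\alpha)$. For the sequential specification, I would show that at the moment of a {\sf scan}'s {\sf fetch\&add} the register $R$ encodes, in each component, the value written by the {\sf update} on that component whose {\sf fetch\&add} is the latest one before the scan's; this follows from the representation claim above, since each {\sf update}'s {\sf fetch\&add} atomically overwrites only its own component, and a no-op {\sf update} with $v = {\it prevVal}$ (which performs {\sf fetch\&add}($R,0$)) merely re-asserts the value already present. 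Thus the returned view matches, in every component, the value of the latest preceding {\sf update} in $L(\alpha)$.

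Finally, I would establish prefix-closure, which yields strong linearizability. If $\alpha$ is a prefix of $\beta$, then every step of $\beta$ not already in $\alpha$ occurs after all steps of $\alpha$; in particular every {\sf fetch\&add} counted in $L(\beta)$ but not in $L(\alpha)$ comes strictly later, so $L(\beta)$ is obtained from $L(\alpha)$ by appending operations at the end, leaving the relative order of the operations already in $L(\alpha)$ untouched. Moreover, each operation's response value is fixed at its own {\sf fetch\&add}: a {\sf scan}'s view is determined by the value of $R$ it reads there, and an {\sf update}'s response is a mere acknowledgment, so no return value changes under extension. Hence $L(\alpha)$ is a prefix of $L(\beta)$, and $L$ is the required prefix-closed linearization function, completing the proof.
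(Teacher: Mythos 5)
Your proof is correct and takes essentially the same approach as the paper: the paper's argument is exactly to place each operation's linearization point at its single {\sf fetch\&add} step and observe that these fixed points immediately yield a prefix-closed linearization. The paper leaves the supporting details implicit, and your write-up supplies them correctly --- in particular the no-carry/no-borrow argument showing that ${\mathit posAdj}-{\mathit negAdj}$ modifies only the caller's interleaved bit positions.
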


\subsection{Simple Types}

Aspnes and Herlihy~\cite{AspnesH1990wait} define objects
where any two operations $o_1$ and $o_2$ either \emph{commute}
(meaning the system configuration obtained after both operations
have been executed consecutively is independent of the order of
the two operations), or one of them \emph{overwrites} the other
(meaning that the system configuration obtained after the overwriting
operation has been performed is not affected by whether
or not the other operation is executed immediately before it).
Examples of such objects are a (monotonic and non-monotonic) counter
and a max register.
Aspnes and Herlihy show a wait-free linearizable implementation of
any such object, using one {\sf scan} and one {\sf update} operation.
Such objects are called \emph{simple types}
by Ovens and Woelfel~\cite{OvensW19},
who show that this implementation is strongly linearizable.
We next provide a simple proof of this fact,
which immediately yields a strongly-linearizable implementation
of simple types using {\sf fetch\&add},
by substituting our strongly-linearizable atomic snapshots.
See Algorithm~\ref{alg:simple types}.

\begin{algorithm}[tb]
\caption{\small Implementation of a simple type~\cite{AspnesH1990wait};
based on~\cite[Algorithm 5]{OvensW19}.}
\label{alg:simple types}
\begin{algorithmic}[1]\small
\Statex struct node :
\Statex  $\quad$ invocation description, $invocation\in O$
\Statex  $\quad$  response, $response \in R$
\Statex  $\quad$  pointers to nodes, $preceding[1 . . . n]$
\Statex shared atomic snapshot object $root = (null, . . . , null)$
\Statex
\Procedure{lingraph}{$G$}
\State    let $op_1, . . . , op_k$ be a topological sort of $G$
\State    $L$ = $G$
\For{$i \in \{1, . . . , k - 1\}$}
{\For{$j \in \{i + 1, . . . , k\}$}
{\If{$op_i$ dominates $op_j$ and adding ($op_j$ , $op_i$) to $L$ does not complete a cycle}
\State        add ($op_j$ , $op_i$) to $L$
\EndIf
\If{$op_j$ dominates $op_i$ and adding ($op_i$ , $op_j$) to $L$ does not complete a cycle}
\State        add ($op_i$ , $op_j$) to $L$
\EndIf}
\EndFor}\EndFor
\State return L
\EndProcedure
\Statex
\Procedure{execute$_p$}{invoke}: \Comment{Executing invoke on process $p$}
\State    view = $root$.scan() \label{line:scan}
\State    G = BFS/DFS traversal of the set of nodes starting from those in view
\State    S = topological sort of \textsc{lingraph}(G) \label{line:lin}
\State    initialize a new node e = $\{\bot,\bot,\bot\}$
\State    e.$invocation$ = invoke
\State    inv(op) = (invoke, id)
\State    rsp(op) = (resp, id) such that S $\circ$ inv(op) $\circ$ rsp(op) is valid
\State    e.$response$ = resp
\For{$i \in {1, . . . , n}$}
\State e.preceding[i] = view[i]
\EndFor
\State $root$.update$_p$(address of e) \label{line:write}
\State return e.$response$
\EndProcedure
\end{algorithmic}
\end{algorithm}

We concentrate on proving that the implementation is strongly linearizable,
as the other properties, e.g., correctness of return values, support the same arguments as in~\cite{AspnesH1990wait,OvensW19}.
We prove strong linearizability by defining a forward simulation from Algorithm~\ref{alg:simple types} to an atomic object (that implements the same type) which is defined as in~\cite[Page 6]{AttiyaE2019}. States of the atomic object are represented as sequential executions and transitions simply append operations to executions.
A \emph{forward simulation} $F$ is a binary relation between states of the two objects which is used to show inductively that any execution $\alpha$ of Algorithm~\ref{alg:simple types} can be mimicked by a sequential execution $S$ which is actually a linearization of the former. This relation is required to (1) relate any initial states of the two objects (base step of the induction), and (2) for every two related states $s_1$ (of Algorithm~\ref{alg:simple types}) and $s_2$ (of the atomic object), and every successor $s_1'$ of  $s_1$, there exists a state $s_2'$ of the atomic object which is reachable by 0 or more steps from $s_2$ and such that $s_1'$ and $s_2'$ are again related by $F$.

\begin{theorem}
\label{thm-from-snapshot-to-simple}
There is a wait-free strongly-linearizable implementation of any simple
type object from atomic snapshots.
\end{theorem}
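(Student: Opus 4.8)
The plan is to use the fact, recalled just above, that exhibiting a forward simulation into the canonical atomic object of~\cite{AttiyaE2019} suffices for strong linearizability. Because correctness of the returned values can be borrowed verbatim from~\cite{AspnesH1990wait,OvensW19}, I would concentrate solely on building such a simulation $F$; prefix-closedness of the induced linearization then comes for free, since a forward simulation only ever \emph{appends} operations to the sequential execution kept in the atomic state. The whole content of the argument is therefore (i) choosing the linearization point of each operation and (ii) discharging the two clauses of the simulation: relating the initial states, and matching every step of Algorithm~\ref{alg:simple types} by zero or more appends.

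The decisive choice is where each operation ``takes effect''. For a \emph{mutating} operation, one whose type can alter the abstract state, such as {\sf WriteMax} or an increment, I would take the linearization point to be its {\sf update} on $root$ (line~\ref{line:write}); for a \emph{query}, one that never changes the state, such as {\sf ReadMax} or a counter read, I would take it to be its {\sf scan} (line~\ref{line:scan}). The relation $F$ then pairs an algorithm configuration $s_1$ with the sequential execution $s_2$ that lists, in the temporal order of their take-effect steps, all operations whose take-effect step has already occurred in $s_1$, each tagged with the response the algorithm computes for it. That response is a deterministic function of the {\sf scan} result alone---the graph $G$ and its linearization $S$ (lines~\ref{line:scan}--\ref{line:lin}) only traverse immutable nodes reachable from the view---so the tag is already determined at the {\sf scan}, and $s_2$ may legitimately contain pending queries. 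The base case is immediate: the initial configuration is paired with the empty execution.

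For the inductive step, any step that is not a take-effect step leaves the set of taken-effect operations unchanged and is matched by zero appends, while a take-effect step appends its operation to the \emph{end} of $s_2$; as every append is at the end, the resulting map $L$ is prefix-closed by construction. Real-time order is respected because for every operation the take-effect step lies strictly between its invocation and its response, so whenever $OP$ precedes $OP'$ the take-effect of $OP$ precedes that of $OP'$. For validity I would first check that the take-effect order \emph{refines} the order recorded by the views: if an operation sees another in its {\sf scan}, the latter's take-effect precedes the former's; this is a short case analysis on the four query/mutator combinations, using that a {\sf scan} reflects exactly the updates completed before it. Correct responses then follow from the commute/overwrite structure of simple types exactly as in~\cite{AspnesH1990wait,OvensW19}, since that structure makes responses insensitive to the order chosen among concurrent operations.

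The main obstacle, and the reason a uniform choice fails, is this asymmetric placement of queries and mutators. Linearizing \emph{every} operation at its {\sf update}, as in Theorems~\ref{thm-from-fetch-n-add-to-max-register} and~\ref{thm-from-fetch-n-add-to-snapshots}, is \emph{not} prefix-closed: a query returning an old value may complete its {\sf update} after a concurrent mutator it never observed, which would force the query before an already-appended mutator and destroy the prefix relation. Pushing each query's linearization point back to its {\sf scan} repairs this, but then one must re-establish, via the reconstruction from {\it preceding} pointers, that the mutators preceding a query in $s_2$ are precisely those the query saw, so that its locally computed value remains correct. I expect the delicate point to be exactly this bookkeeping: the take-effect order need not coincide with the order a single \textsc{lingraph} call would impose on the whole node set, since concurrent mutators may be ordered by update time rather than by dominance, yet both orders are valid, and it is the overwrite property that guarantees they yield the same responses.
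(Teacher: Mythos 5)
Your overall strategy --- a forward simulation into the canonical atomic object of~\cite{AttiyaE2019}, with queries linearized at their {\sf scan} --- is in the spirit of the paper's proof, but the way you order concurrent \emph{mutators} contains a genuine gap. You linearize every mutator at its {\sf update} to $root$ (line~\ref{line:write}), hence order concurrent mutators by the real-time order of their writes, and you defend this with the claim that ``it is the overwrite property that guarantees they yield the same responses'' as the dominance order computed by \textsc{lingraph}. That claim is false, because overwriting is asymmetric: if $o_1$ overwrites $o_2$ but the two do not commute, the specification only says that the state after $o_2\cdot o_1$ equals the state after $o_1$ alone; it says nothing about the state after $o_1\cdot o_2$. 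Concretely, take the simple type ``counter with reset'' (increments commute, reset overwrites increment, and they do not commute). Let a reset $o_1$ and an increment $o_2$ both scan the empty $root$, then let $o_1$ write, then $o_2$ write, then a read $q$ scan. The algorithm computes $q$'s response via \textsc{lingraph}, which places the dominated $o_2$ before $o_1$ and yields $0$; your simulation's sequential execution is $o_1\cdot o_2$, whose state when $q$ is appended is $1$. The response your relation must tag $q$ with ($0$) is not a legal response of the atomic object in that state, so the inductive step of the simulation fails. (For types whose operations pairwise commute, such as max registers, your ordering is harmless, which is why the discrepancy is easy to miss.)

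The missing idea --- and the crux of the paper's proof --- is that the write order must not decide the placement of concurrent operations; the linearization must at all times be a topological sort of $\textsc{lingraph}$ applied to the graph written so far. The paper achieves this by linearizing \emph{early}: when an operation $o$ performs its write at line~\ref{line:write}, the simulation appends not just $o$ but also every still-pending operation that has scanned, has not yet written, and is transitively dominated by $o$, placed \emph{before} $o$ in \textsc{lingraph} order; the later writes of those operations then become stuttering steps. This keeps dominated concurrent operations ahead of dominating ones, matching the order that every subsequent invocation reconstructs locally, while remaining append-only and hence prefix-closed. Your proposal needs to be reworked along these lines rather than patched at the final ``bookkeeping'' step you flag.
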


\begin{proof}
Procedure \textsc{execute$_p$} in Algorithm~\ref{alg:simple types} maintains a graph where nodes represent invocations with their responses and edges represent a partial real-time order, i.e., the fact that an invocation started after another one finished (the predecessors of an operation in real-time are stored in the $preceding$ array). The snapshot object $root$ stores the last (maximal) operations in this real-time order. Every invocation builds a view of this graph by reading $root$ and traversing the $preceding$ fields until reaching $null$ values. Then, it defines a linearization of this graph (a sequential execution) starting from an arbitrary topological sort and according to a dominance relation between operations defined as follows: $o_1$ is dominated by $o_2$ if $o_2$ overwrites $o_1$ but not vice-versa, or $o_1$ and $o_2$ overwrite each other and $o_1$ is executed by a process with a smaller id. The returned response is the response possible by executing the invocation immediately after this linearization.

We say that a set of operations $O$ is \emph{transitively-dominated} by $o$ if for every $o'\in O$, there exist a set of operations $o_1=o'$,$\ldots$,$o_k$ in $O$, such that $o_i$ is dominated by $o_{i+1}$ for all $1\leq i<k$, and $o_k$ is dominated by $o$.

We define a forward simulation $F$ from Algorithm~\ref{alg:simple types} to the corresponding atomic object as follows. Let $C_1$ be a state of Algorithm~\ref{alg:simple types}, and let $G_1$ be the graph that it stores, i.e., containing all the nodes reachable from those in the snapshot $root$. Let $C_1^*$ be the state obtained from $C_1$ by making the following set of operations complete: operations that scanned the snapshot $root$ (line~\ref{line:scan}) before a write to $root$ (line~\ref{line:write}), i.e., their view of the graph misses at least one operation from $G_1$, and which are transitively-dominated by an operation already recorded in $G_1$. Let $G^*_1$ be the graph stored in $C_1^*$ (which may contain more nodes in comparison to $G_1$, namely, those that correspond to operations completed when going from $C_1$ to $C_1^*$).
The forward simulation $F$ associates $G^*_1$ with every linearization (sequential execution) defined as in line~\ref{line:lin} (a topological sort of $\textsc{lingraph}(G^*_1)$)

We now show that $F$ is indeed a forward simulation. The fact that it relates initial states is trivial. The interesting step is writing the snapshot $root$ (line~\ref{line:write}). The other steps are simulated by a stuttering step (0 steps) of the atomic object.
Let $C_1$ and $C_2$ be configurations of Algorithm~\ref{alg:simple types} before and after a write to $root$. Let $S_{C_1}$ be a sequential execution associated to $C_1$ as explained above.

If the current operation $o$ (executing the write) is already included in $S_{C_1}$ then this step is simulated by a stuttering step of the atomic object. This holds because the graph $G_2^*$ corresponding to $C_2$ is the same as $G_1^*$, and thus, $S_{C_1}$ is a sequential execution associated by $F$ to $C_2$ as well. Indeed, $o$ was included in $G_1^*$ because it was transitively dominated by an operation in $G_1$, and any operation transitively dominated by $o$ is also transitively dominated by an operation in $G_1$.

Otherwise, let $D_o$ be the set of operations which are (1) pending in $C_1$, scanned the snapshot $root$ (line~\ref{line:scan}), but did not write to $root$, and (2) transitively-dominated by $o$. This step of $o$ is simulated by the sequence $S$ of operations in $D_o\cup\{o\}$ ordered as in the definition of a linearization of a graph, i.e., a topological sort of \textsc{lingraph} applied to a graph containing the nodes corresponding to $D_o\cup\{o\}$. Applying these steps starting from the atomic object state $S_{C_1}$ leads to the state $S_{C_1}\cdot S$. It is easy to see that $S_{C_1}\cdot S$ is one of the sequential executions associated by $F$ to $C_2$.
\end{proof}

Combining this result with the implementation of
Theorem~\ref{thm-from-fetch-n-add-to-snapshots},
and relying on the fact that strong linearizability
can be composed~\cite[Theorem 10]{AttiyaE2019}.

\begin{theorem}
\label{cor-from-fetch-n-add-to-simple}
There is a wait-free strongly-linearizable implementation of any simple
type object from fetch\&add.
\end{theorem}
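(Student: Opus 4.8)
The plan is to obtain the result as a direct composition of the two preceding constructions, with no new combinatorial argument required. Theorem~\ref{thm-from-snapshot-to-simple} gives a wait-free strongly-linearizable implementation $\mathcal{B}$ of any simple type that treats the atomic snapshot object $root$ as a black-box base object, invoking only its {\sf scan} and {\sf update} operations (Algorithm~\ref{alg:simple types}). Theorem~\ref{thm-from-fetch-n-add-to-snapshots} gives a wait-free strongly-linearizable implementation $\mathcal{S}$ of the atomic snapshot object using only {\sf fetch\&add}. The target implementation is obtained by syntactically substituting $\mathcal{S}$ for $root$ in $\mathcal{B}$: whenever a process would invoke {\sf scan} or {\sf update} on $root$, it instead runs the corresponding procedure of $\mathcal{S}$, which applies only {\sf fetch\&add} steps.

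First I would invoke the composition theorem for strong linearizability, \cite[Theorem 10]{AttiyaE2019}. This result guarantees that if an object $T$ has a strongly-linearizable implementation using an object $U$ as a base object, and $U$ itself has a strongly-linearizable implementation from base objects $\mathcal{O}$, then the substitution yields a strongly-linearizable implementation of $T$ from $\mathcal{O}$. Instantiating $T$ as the simple type, $U$ as the atomic snapshot, and $\mathcal{O}$ as {\sf fetch\&add} gives strong linearizability of the composite. Intuitively, the two prefix-closed linearization functions compose: the strong linearization of $\mathcal{S}$ maps the low-level {\sf fetch\&add} steps of an execution to a linearization of the snapshot operations, and then the strong linearization of $\mathcal{B}$ maps the resulting snapshot-level execution to a linearization of the simple-type operations; prefix-closedness of the composite function follows from prefix-closedness of each factor.

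Second, I would check that wait-freedom is preserved under the substitution. Each high-level operation of $\mathcal{B}$ performs a bounded number of snapshot operations (one {\sf scan} and one {\sf update}), interleaved with a finite local traversal and topological sort of the graph reachable from the scanned view, while each snapshot operation of $\mathcal{S}$ completes in a single {\sf fetch\&add} step. Hence every high-level operation of the composite completes in finitely many of its own steps, so the composite is wait-free.

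The main point to verify, rather than a genuine obstacle, is that the hypotheses of \cite[Theorem 10]{AttiyaE2019} actually hold here: one must confirm that Algorithm~\ref{alg:simple types} accesses $root$ solely through its abstract {\sf scan}/{\sf update} interface, so that the snapshot is a bona fide base object amenable to substitution, and that $\mathcal{S}$ is (strongly) linearizable as a standalone object. The former holds by inspection of the algorithm and the latter is exactly Theorem~\ref{thm-from-fetch-n-add-to-snapshots}, so the composition applies and the claim follows.
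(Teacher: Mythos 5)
Your proposal is correct and matches the paper's own argument, which likewise obtains the result by composing Theorem~\ref{thm-from-snapshot-to-simple} with Theorem~\ref{thm-from-fetch-n-add-to-snapshots} via the composition theorem for strong linearizability \cite[Theorem 10]{AttiyaE2019}. The extra checks you spell out (that the snapshot is accessed only through its abstract interface and that wait-freedom is preserved) are implicit in the paper but harmless to make explicit.
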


\section{Strongly-linearizable implementations of objects with consensus number 2}
\label{sec:algs cons 2}

In this section we exhibit examples of objects with consensus number 2 that can be
lock-free or wait-free implemented from test\&set or fecth\&add, while achieving
strong linearizability.

\subsection{Readable multi-shot test\&set}

We start by implementing readable test\&set.
In the implementation, the processes share a read/write register $state$, initialized to $0$,
and an $n$-process test\&set object $ts$. The $read()$ operation simply reads $state$ and returns the read value.
The $test\&set()$ operation first performs $ts.test\&set()$, then writes $1$ in $state$ and finally returns the value obtained from $ts$.

The key idea of the strong linearization proof of the implementation is that $state$ contains the state of the object at all times,
and when it changes from $0$ to $1$, all $test\&set()$ operations that have accessed $ts$ but have not written yet in $state$
are linearized at that write step, placing first the $test\&set()$ operation that obtains $0$ from $ts$.

\begin{restatable}{theorem}{ReadTS}
\label{thm-from-ts-to-read-ts}
There is a wait-free strongly-linearizable implementation
of readable test\&set using test\&set.
\end{restatable}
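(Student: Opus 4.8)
The plan is to establish strong linearizability by exhibiting an explicit prefix-closed linearization function $L$ that maps each finite execution $\alpha$ of the implementation to a linearization $L(\alpha)$, and then to argue wait-freedom separately. Recall the implementation: a read/write register $state$ (initialized to $0$) and an underlying test\&set object $ts$; a $read()$ reads $state$ and returns it, while a $test\&set()$ first invokes $ts.test\&set()$, then writes $1$ into $state$, and finally returns the value it got from $ts$. Wait-freedom is immediate, since each operation performs a bounded number of steps (one $ts$ access plus one write, or a single read) and the base objects are themselves wait-free, so I would dispatch this in a single sentence and concentrate on strong linearizability.

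The core of the argument is defining $L$ and checking prefix-closedness. First I would identify the single \emph{winning} $test\&set()$ operation, namely the one whose $ts.test\&set()$ returns $0$; every other $test\&set()$ returns $1$ and is a \emph{loser}. The key event is the first write of $1$ to $state$, which necessarily comes from the winner or from some loser that happened to reach its write step first. I would linearize operations as follows: every $read()$ is linearized at its (single) read step, returning the value it actually read. For $test\&set()$ operations, I linearize them in a batch \emph{at the first write of $1$ to $state$}: concretely, at the moment $state$ first becomes $1$ in the execution $\alpha$, I place into the linearization all $test\&set()$ operations that have already completed their $ts.test\&set()$ step by that point, ordering the winner first (so it reads $0$) followed by the losers in any fixed order (each reading $1$); any $test\&set()$ whose $ts$ access occurs after this write is linearized at its own subsequent write step. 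This placement is consistent with return values because the winner returns $0$ and all losers return $1$, exactly matching the sequential specification of readable test\&set once the object has flipped. I would then verify that this order respects real-time precedence: a $read()$ preceding a $test\&set()$, or vice versa, is correctly ordered because the read step and the write-flip step occur in the stated real-time order, and the value read by any $read()$ equals $0$ before the flip and $1$ after, matching its linearization position relative to the batch.

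The step I expect to be the main obstacle is proving \emph{prefix-closedness} of $L$, i.e.\ that extending $\alpha$ to a longer execution $\beta$ never reorders or removes anything already placed in $L(\alpha)$. The delicate point is the batch linearization at the write-flip: I must argue that the identity of the winner, the set of operations batched at the flip, and their relative order are all determined by the prefix up to the flip and cannot change in any extension. The winner is determined as soon as the $0$-returning $ts.test\&set()$ occurs, which is at or before the flip, so its first position is fixed once recorded; the set of losers batched at the flip is exactly those whose $ts$ access already appeared, a quantity that only grows monotonically and is frozen at the flip event; and the positions of $read()$ operations are pinned to their own read steps, which never move. I would therefore show that once a write of $1$ to $state$ has occurred in $\alpha$, the entire $test\&set()$ batch and the winner's leading position are permanently fixed in every extension, and that $read()$ operations added in $\beta$ only append after their read steps without disturbing earlier entries. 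Establishing that no later event can retroactively alter the winner or the batch membership recorded at the flip is the crux; everything else follows from the fact that each individual linearization point is a concrete step in the execution whose relative order with all earlier steps is immutable.
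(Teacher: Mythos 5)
Your proposal is correct and follows essentially the same route as the paper's proof: linearize reads at their read step, batch all \textsf{test\&set} operations that have accessed $ts$ at the first write of $1$ to $state$ with the winner first, and observe that the identity of that write and of the winner is fixed in every extension, which gives prefix-closedness. The only cosmetic difference is that you linearize late-arriving \textsf{test\&set} operations at their write step rather than at their $ts$ access, which works equally well.
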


\begin{proof}
The implementation is clearly wait-free.
Also, it is clear that in every execution at most one test\&set operation returns $0$,
due to the specification of the object $ts$,
and if the execution has at least one correct process, only one test\&set operation returns $0$.

The operations are linearized as follows.
Read operations are linearized when they read $state$.
Test\&set operations are linearized in a slightly more complex way.
Let $e$ be the write that writes for the first time $1$ in $state$ (if any), and $op^*$ be the test\&set operation that obtains $0$ from $ts$ (if any).
Then $op^*$ is linearized at $e$, and right after $op^*$ all test\&set operations that
accessed $ts$ before $e$ are linearized; all other test\&set operations are linearized when they access $ts$.

Let $op$ be the test\&set operation $e$ belongs to.
We claim that the linearization points define a valid sequential execution.
First observe that if $op^* = op$, then $e$ is a step of $op^*$.
If not, $e$ is not an step of $op^*$, but it lies between the invocation and response of $op^*$:
if $e$ precedes the invocation of $op^*$, then $op^*$ could not get $0$ from $ts$,
and if the response of $op^*$ precedes $e$, then $e$ could not be first write that writes $1$ in $state$.
Similarly, $e$ lies between the invocation and response of any test\&set operation distinct from $op^*$ linearized at $e$:
if not, either $e$ is not the first write that writes $1$ in $state$, or the operation does not access $ts$ before $e$.
In any case, linearization points respect real-time order of operations.
The sequential execution is valid becase $state$ effectively reflects the state of the
implemented test\&set object: $op^*$ is placed at the point $state$ changes,
no other test\&set operation appears before $op^*$, and read operations return the value in $state$.

Finally, the linearization points define a prefix-closed linearization function because in any extension
of an execution it remains true that $e$ is the first write that writes $1$ in $state$ and $op^*$
is the test\&set operation that returns $0$.
\end{proof}

Next, we consider \emph{multi-shot} test\&set, which has an operation $reset()$ that resets the state of the objects to $0$.
We now provide a readable and multi-shot test\&set using as (atomic) base objects
readable test\&set and max register.
In the implementation, the processes share a max register object, $curr$, initialized to $1$,
and an infinite array, $TS$, of readable test\&set objects.
The $test\&set()$ and $read()$ operations simply return $TS[curr.readMax()].test\&set()$ and $TS[curr.readMax()].read()$, respectively.
The $reset()$ operation first performs $curr.readMax()$, storing the result in a local variable $c$,
then executes $TS[c].read()$, and if it obtains $1$, it does $curr.maxWrite(c+1)$.

The key idea in the strong linearizability proof is that the current
state of the object is that of $TS[v]$,
where $v$ is the current value in $curr$,
and a reset operation intends to reset the object only
if reads $1$ from $TS[v]$ (i.e., only if necessary).
Logically, the object is reset, hence transitions from
state $1$ to $0$, when a reset operation executes $curr.maxWrite(v+1)$ for the first time (several reset operations might
execute the same but only the first one has effect on the state of $curr$),
and when this even happens, every test\&set operation that has accessed $TS[v]$ is linearized.

\begin{restatable}{theorem}{MultiTS}
\label{thm-from-read-ts-to-ts}
There is a wait-free strongly-linearizable implementation
of readable and multi-shot test\&set using test\&set and max-register.
\end{restatable}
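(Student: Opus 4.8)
The plan is to reason about the construction as if its base objects---the readable test\&set objects $TS[\cdot]$ and the max register $curr$---were atomic, and only at the end appeal to composition of strong linearizability~\cite[Theorem 10]{AttiyaE2019} together with Theorem~\ref{thm-from-ts-to-read-ts} to replace each readable test\&set by a test\&set, yielding an implementation from test\&set and max-register. Wait-freedom is immediate, since every high-level operation issues a constant number of base-object operations and contains no loops. Before fixing linearization points I would isolate the \emph{epoch structure} of an execution: since $curr$ is initialized to $1$ and is only ever modified by $maxWrite(c+1)$ steps, where $c$ is a value previously read from $curr$, an easy induction shows that $curr$ assumes exactly the consecutive values $1,2,3,\dots$. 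Let $e_v$ be the (atomic linearization) step of the first $maxWrite(v)$ that raises $curr$ to $v$, with $e_1$ the initial configuration; the interval from $e_v$ to $e_{v+1}$ is \emph{epoch} $v$. The invariant driving everything is that throughout epoch $v$ the logical state of the implemented object equals the state of $TS[v]$, which starts at $0$ and, once some $test\&set()$ sets it, stays $1$ until the epoch ends.

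The heart of the proof is the assignment of linearization points, organized per operation. A $read()$ or $test\&set()$ that read $curr=v$ and accessed $TS[v]$ \emph{while $curr$ was still $v$} is linearized at that $TS[v]$-access step; any $read()$ or $test\&set()$ that read $curr=v$ but accesses $TS[v]$ only after $e_{v+1}$ is instead linearized at $e_{v+1}$, immediately before the epoch-ending reset, and necessarily returns $1$ (because $TS[v]$ was already $1$ when the epoch ended, since a reset can advance $curr$ only after reading $TS[v]=1$). A $reset()$ that read $TS[c]=0$ is linearized at that read, as a no-op on state $0$; the $reset()$ whose $maxWrite$ constitutes $e_{c+1}$ is linearized at $e_{c+1}$, where it moves the state from $1$ to $0$; and any remaining $reset()$ that read $TS[c]=1$ but whose $maxWrite$ is superseded is linearized at $e_{c+1}$ just after the effective reset, again as a no-op on state $0$.

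With these points fixed, validity follows epoch by epoch: within epoch $v$ the linearized $TS[v]$-operations form a legal readable-test\&set execution from the fresh state $0$---by atomicity of $TS[v]$ at most one $test\&set()$ returns $0$, while all other $TS[v]$-operations and all late operations squeezed at $e_{v+1}$ return $1$---and the epoch-ending reset restores the state to $0$ for epoch $v+1$. Real-time order is preserved because every operation is linearized inside its own interval: an operation that read $curr=v$ issued its $readMax$ before $e_{v+1}$, and when it is linearized at $e_{v+1}$ its response comes only after the later $TS[v]$-access (or $maxWrite$), so $e_{v+1}$ lies inside its interval. Consequently all operations squeezed at a given $e_{v+1}$ have intervals containing $e_{v+1}$ and hence pairwise overlap, so their relative order is unconstrained by real time and can be chosen to match the state transitions (late operations returning $1$, then the effective reset $1\to 0$, then the redundant resets).

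I expect the main obstacle to be prefix-closedness, i.e. strong rather than plain linearizability, and it is exactly the slow operations that access $TS[v]$ after their epoch has ended that make it delicate: such an operation must be linearized \emph{in the past}, during epoch $v$, even though the step revealing its outcome occurs much later. The resolution, mirroring the proof of Theorem~\ref{thm-from-ts-to-read-ts}, is to linearize these operations not at their access but at the frontier step $e_{v+1}$: at that moment each such pending operation has already read $curr=v$, so its outcome is \emph{determined} ($1$ for $read()$/$test\&set()$, a no-op for $reset()$) without inspecting the future, and---crucially---none of them has completed before $e_{v+1}$, so omitting them from the linearizations of earlier prefixes is consistent. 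Hence no operation is ever inserted behind an already-linearized one, the order of the operations placed at $e_{v+1}$ is fixed by the prefix, and the linearization function is prefix-closed. I would finish with the bookkeeping checks---that every completed operation is linearized by the time it completes, and that the epoch boundaries $e_v$ and the atomic order of each $TS[v]$ are stable under extension (the latter by strong linearizability of the base objects)---and then invoke composition to obtain the statement for test\&set and max-register.
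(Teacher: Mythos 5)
Your proof is correct and follows essentially the same route as the paper's: the same epoch structure driven by the monotonically increasing value of $curr$, the invariant that the logical state is that of $TS[v]$, linearization of operations at their $TS$-access when they act within their epoch, and batch linearization of the stragglers at the first effective $maxWrite(v+1)$ (late reads/test\&sets returning $1$, then the effective reset, then the superseded resets), with prefix-closedness following because all batched operations are still pending at that event. Your version is in fact slightly more careful than the paper's on two points the paper leaves implicit---the explicit composition step replacing the readable test\&set base objects via Theorem~\ref{thm-from-ts-to-read-ts}, and the observation that the outcomes of the batched operations are already determined at the reset event---so no changes are needed.
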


\begin{proof}
Consider the implementation described above.
The implementation is clearly wait-free. For the proof of strong linearizability,
observe that the state of $curr$ monotonically increments by one in any execution, as it evolves.
Suppose that at a given moment of time in an execution, $curr$  contains the value $v$.
The key idea is that the current state of the object is that of $TS[v]$ and a reset operations intends to reset the object only
if reads $1$ from $TS[v]$ (i.e., only if necessary).
Logically, the object is reset, hence transitions from
state $1$ to $0$, when a reset operation executes $curr.maxWrite(v+1)$ for the first time (several reset operations might
execute the same but only the first one has effect on the state of $curr$).
Let $e$ denote such event.
As long as $e$ does not occur, all operations are linearized at the moment
they access the objects in $TS$. When $e$ does occur, at that moment of time the following operations are linearized:
in any order, all tes\&set and read operations that have read $v$ from $curr$ (before $e$) but have not accessed $TS[v]$ yet
(note all such test\&set and read operations surely will obtain $1$ from $TS[v]$),
then the reset operation $e$ belongs to (which actually changes the state of $curr$),
followed by all reset operations that have read $v$ from $curr$ (before $e$) and but have
not read $TS[v]$ yet (all surely will obtain $1$ from $TS[v]$) or
have not executed  $curr.maxWrite(v+1)$ yet (none of them will affect the state of $curr$).

We first observe that every operation is linearized at an event that lies between its invocation and response,
hence the induced sequential execution respects operations' real-time order.
The interesting case is when a bunch of operations are linearized at an event like $e$ above.
Note that if $e$ precedes the invocation of any of these operations, then it could not read $v$ from $curr$,
and if the response of any of these operations precedes $e$, then it is not true that the
operations has not accessed $TS[v]$ or execute $curr.maxWrite(v+1)$ when $e$ happens.
Now we observe that the sequential execution is valid because between two consecutive
events $e$ and $e'$ when the object is reset,
there is one test\&set operation $op^*$ that returns $0$
($e'$ cannot exist if no operation obtains $0$ from the current object in $TS$),
all previous read operation to $op^*$ in the linearization return $0$ (any of them should access the current object
in $TS$ before $op^*$),
and all test\&set and read operations before the reset operations linearized at $e$ obtain $1$.
Finally, the linearization naturally define a prefix-closed linearization function as they were defined as
executions evolve in time.
\end{proof}

By Theorems~\ref{thm-from-fetch-n-add-to-max-register} and~\ref{thm-from-ts-to-read-ts},
there are wait-free strongly-linearizable implementations of
max registers and readable test\&set using fecth\&add and test\&set, respectively.
These implementations, combined with the implementation above (Theorem~\ref{thm-from-read-ts-to-ts}) give:

\begin{corollary}
There is a wait-free strongly-linearizable implementation
of readable and multi-shot test\&set using test\&set and fetch\&add.
\end{corollary}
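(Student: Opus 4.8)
The plan is to obtain the corollary by a purely modular argument, reusing the three implementations already established and invoking the composition theorem for strong linearizability~\cite[Theorem 10]{AttiyaE2019}. The starting point is the implementation $\mathcal{A}$ underlying Theorem~\ref{thm-from-read-ts-to-ts}, which is wait-free and strongly linearizable but treats its max register $curr$ and the cells of its array $TS$ of readable test\&set objects as atomic base objects. I would leave the code of $\mathcal{A}$ untouched and only change what realizes those base objects: replace $curr$ by the wait-free strongly-linearizable max register of Theorem~\ref{thm-from-fetch-n-add-to-max-register}, whose base object is fetch\&add, and replace each cell $TS[i]$ by the wait-free strongly-linearizable readable test\&set of Theorem~\ref{thm-from-ts-to-read-ts}, whose base objects are test\&set and a read/write register. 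After these substitutions the only remaining base objects are test\&set, fetch\&add, and read/write registers; since read/write registers have consensus number $1$ and are part of the model, the resulting implementation uses exactly the primitives claimed.

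For correctness I would argue in two parts. Wait-freedom is immediate: every high-level operation of $\mathcal{A}$ issues finitely many operations on $curr$ and on the cells of $TS$, and each of these is now realized by a wait-free subroutine, so the whole operation still completes in finitely many steps. Strong linearizability follows from composing the strongly-linearizable layers: the theorem of~\cite{AttiyaE2019} guarantees that substituting a strongly-linearizable implementation for an atomic base object inside a strongly-linearizable implementation again yields a strongly-linearizable implementation, and I would apply it once for $curr$ and once, uniformly, for the cells of $TS$.

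The main point to verify is that the composition theorem still applies when $TS$ is an infinite array and the substitution is two levels deep, since the target object is built on readable test\&set, which is itself built on plain test\&set. I expect this to be the only delicate step, and not a genuine obstacle: composition is associative and prefix-closedness of the linearization functions is preserved through each layer, while in any finite execution prefix only finitely many cells of $TS$ are ever accessed, so the infinite array merely instantiates the same strongly-linearizable construction independently in each cell.
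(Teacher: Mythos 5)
Your proposal is correct and matches the paper's own (very terse) justification: the paper likewise obtains this corollary by plugging the max register of Theorem~\ref{thm-from-fetch-n-add-to-max-register} and the readable test\&set of Theorem~\ref{thm-from-ts-to-read-ts} into the construction of Theorem~\ref{thm-from-read-ts-to-ts} and appealing to the composability of strong linearizability from~\cite{AttiyaE2019}. You merely spell out the substitution, the preservation of wait-freedom, and the (harmless) issues of the infinite array and two-level nesting in more detail than the paper does.
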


If we consider the read/write strongly-linearizable max register implementation
in~\cite{HelmiHW2012, OvensW19},
which is only lock-free, then we obtain:

\begin{corollary}
There is a lock-free strongly-linearizable implementation
of readable and multi-shot test\&set using test\&set.
\end{corollary}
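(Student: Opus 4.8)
The plan is to reuse the construction behind Theorem~\ref{thm-from-read-ts-to-ts} verbatim, changing only how its two auxiliary objects are realized. That construction wait-free strongly-linearizably implements readable and multi-shot test\&set from a readable test\&set and a max register. For the readable test\&set I would keep the wait-free strongly-linearizable implementation from test\&set given by Theorem~\ref{thm-from-ts-to-read-ts}. For the max register, instead of the wait-free implementation from fetch\&add of Theorem~\ref{thm-from-fetch-n-add-to-max-register}, I would substitute the read/write strongly-linearizable max register of~\cite{HelmiHW2012,OvensW19}, which is only lock-free. Since read/write registers are available base objects, the only non-trivial primitive used by the resulting composite implementation is test\&set, matching the claim.

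Strong linearizability would then follow exactly as for the preceding corollary. Each layer is strongly linearizable on its own: the outer construction by Theorem~\ref{thm-from-read-ts-to-ts}, the readable test\&set by Theorem~\ref{thm-from-ts-to-read-ts}, and the max register by~\cite{HelmiHW2012,OvensW19}. Because strong linearizability composes~\cite[Theorem~10]{AttiyaE2019}, inlining the two auxiliary implementations into the outer one yields a strongly-linearizable implementation; only the progress guarantee of the max register has changed relative to the previous corollary.

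The substantive point is the progress analysis: I must argue that composing the wait-free outer construction with a lock-free max register produces a lock-free, rather than merely obstruction-free, implementation. First I would note that each high-level operation (test\&set, read, reset) issues only a bounded number of auxiliary operations --- at most two max-register operations and one readable test\&set operation --- together with finitely many local steps, so the outer layer never loops on its own. Consider then an infinite execution of the composite implementation, and suppose for contradiction that only finitely many high-level operations complete. A high-level operation can stall only inside a non-terminating auxiliary call; since readable test\&set is wait-free, such a stalling call must be a max-register operation, so some process takes infinitely many steps inside the max-register implementation. Hence the projection of the execution onto the max register is itself infinite, and by its lock-freedom infinitely many max-register operations complete. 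As there are only $n$ processes, by pigeonhole some process $q$ completes infinitely many max-register operations; since $q$'s high-level operations are wait-free, each consumes a bounded number of max-register calls, and $q$'s readable test\&set calls terminate in finitely many of $q$'s steps, it follows that $q$ completes infinitely many high-level operations, a contradiction.

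I expect this progress composition to be the only delicate step, the rest being a direct reuse of earlier results through the composition theorem. The care needed there is to rule out the scenario in which max-register operations keep completing but are spread across ever-new high-level operations that never finish; the pigeonhole argument over the finite set of $n$ processes, combined with the bounded number of auxiliary calls per high-level operation, is precisely what closes this gap.
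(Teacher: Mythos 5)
Your proposal is correct and follows exactly the paper's route: the paper obtains this corollary by taking the construction of Theorem~\ref{thm-from-read-ts-to-ts}, instantiating the readable test\&set via Theorem~\ref{thm-from-ts-to-read-ts} and substituting the lock-free read/write strongly-linearizable max register of~\cite{HelmiHW2012,OvensW19}, with strong linearizability preserved by composition. The only difference is that the paper leaves the progress composition implicit, whereas you spell out (correctly) the pigeonhole argument showing that a wait-free outer layer with boundedly many auxiliary calls per operation, combined with a lock-free max register, yields a lock-free composite.
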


\subsection{Fetch\&Increment}

In the implementation, there is an infinite array $M$ of readable test\&set objects.
The $fetch\&increment()$ operation performs test\&set on each of the objects in $M$,
in index-ascending order, until it obtains $0$, and returns the index of the object.
Similarly, the $read()$ operation reads the objects in $M$,
in index-ascending order, until it obtains $0$ and returns the index of the object.

The key idea in the strong linearizability proof is that, at all times,
the state of the implemented object is the maximum index $i$
such that the state of the test\&set object $M[i]$ is 0.
Thus, any operation is linearized when it obtains $0$ (either using a read or test\&set)
from an object in~$M$.

\begin{restatable}{theorem}{FetchInc}
\label{thm-from-ts-to-fetch-inc}
There is a lock-free strongly-linearizable implementation
of readable fecth\&increment using test\&set.
\end{restatable}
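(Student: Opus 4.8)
The plan is to take the linearization point of every operation—whether a $fetch\&increment()$ or a $read()$—to be the single step at which it obtains $0$ from some object $M[i]$, and to have it return that index $i$. Since each such step is an atomic access to a readable test\&set object, and two operations can never both obtain $0$ from the same $M[i]$ (a test\&set returning $0$ flips the object to $1$, while a read returning $0$ occurs at an instant when the object still holds $0$), these linearization points are distinct atomic steps and are therefore totally ordered in time. Ordering the operations by this time yields the candidate sequential execution $S$.

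The structural backbone of the whole argument is a single invariant, which I would establish first by induction on the length of the execution: at every point, the set of indices $i$ with $M[i] = 1$ is a contiguous prefix $\{0, \ldots, k-1\}$, and the value of the implemented counter equals $k$, the number of objects currently holding $1$. The supporting facts are that a test\&set object, once set to $1$, never returns to $0$ (there is no reset here), and that every operation scans $M$ in ascending order starting at index $0$. Consequently, whenever an operation reaches $M[i]$ and obtains $0$, it has already observed $M[0], \ldots, M[i-1]$ equal to $1$ at earlier times, and by monotonicity they are still $1$ at its obtaining-$0$ step; hence $i$ is exactly the current prefix length $k$. In particular a $fetch\&increment()$ obtaining $0$ at index $i$ has $k = i$ immediately before its step and flips $M[i]$, advancing $k$ to $i+1$, while a $read()$ obtaining $0$ at index $i$ leaves $k = i$ unchanged.

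Given the invariant, both correctness notions follow quickly. The linearization step of each operation is one of its own steps, so it lies between that operation's invocation and response, and $S$ respects real-time order. Validity holds because at a $fetch\&increment()$'s linearization step the counter value is $i$, equal to the number of $fetch\&increment()$ operations linearized before it, which is exactly the value such an operation must return; and a $read()$ linearized at index $i$ returns the current counter value $i$. Thus $S$ is a legal sequential execution of a readable fetch\&increment. For strong linearizability I would define $L$ to map each finite execution prefix to the sequence of its already-linearized operations ordered by the times of their linearization steps. Because each linearization point is a fixed atomic step determined entirely by the prefix, extending the execution can only append new operations at the end and never reorders those already placed; hence $L(\alpha)$ is a prefix of $L(\beta)$ whenever $\alpha$ is a prefix of $\beta$. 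Every complete operation is linearized, since an operation returns only after obtaining $0$.

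For lock-freedom I would argue by contradiction: in an infinite execution in which no operation completes after some time $t$, no operation obtains $0$ after $t$, so no test\&set sets a new object and the $1$-prefix $\{0, \ldots, k-1\}$ is frozen; but then any process that keeps taking steps scans ascending indices, sees $M[0], \ldots, M[k-1]$ equal to $1$, reaches $M[k] = 0$ within $k+1$ steps, and completes, a contradiction. (The implementation is only lock-free and not wait-free, since a single slow scanner can be overtaken indefinitely by faster operations that keep extending the $1$-prefix.) The main obstacle is the invariant of the second paragraph: everything else is routine once it is in place. Establishing it amounts to ruling out any concurrent interleaving that creates a \emph{gap}, an index holding $1$ above an index holding $0$, which rests precisely on the combination of ascending scans and the irreversibility of test\&set; making this monotonicity argument airtight, and using it to pin down the exact index at which each operation obtains $0$, is the crux of the proof.
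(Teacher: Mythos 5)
Your proof takes essentially the same route as the paper's: linearize each operation at the step where it obtains $0$ from some $M[i]$, observe that the indices holding $1$ always form a contiguous prefix (so that this index equals the current counter value), and note that the fixed linearization points immediately yield a prefix-closed linearization function; your treatment of the prefix invariant and of lock-freedom is in fact more explicit than the paper's. The one step you omit is the final composition: the base objects of this construction are \emph{readable} test\&set objects, so to obtain the theorem as stated (from plain test\&set) you must additionally plug in the wait-free strongly-linearizable implementation of readable test\&set from Theorem~\ref{thm-from-ts-to-read-ts}, relying on the composability of strong linearizability.
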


\begin{proof}
It is not difficult to see that the implementation above is lock-free: the only reason a fetch\&increment or read operation
does not terminate is because infinitely many fetch\&increment terminate.
The key idea of linearizability is that, at all times, the state of the implemented object is the maximum index $i$
such that the state of the test\&set object $M[i]$ is 0.
Thus, any operation is linearized when it obtains $0$ (either using a read or test\&set) from an object in $M$.
The linearization points respect the real-time order of operations, and induce a valid sequential execution
as all processes access the objects in $M$ in the same order.
Furthermore, they induce a prefix-closed linearization function because the points are fixed.
The theorem follows due to this implementation and Theorem~\ref{thm-from-ts-to-read-ts}.
\end{proof}

\subsection{Sets}

We now consider sets. A set object provides two operations, $put(x)$ that adds item $x$ to the set and returns OK,
regardless if the item was already in the set or not, and $take()$ that returns EMPTY, otherwise
it returns and removes any item in the set.
For simplicity we assume that every item is the input of at most one put operation.

Consider the implementation in Algorithm~\ref{alg:from-ts-to-fi} of a set that uses as (atomic) base objects an infinite array of
test\&set objects, one readable fetch\&increment object and an infinite array of read/write objects.
The key idea in the strong linearizability proof is that, at any given moment in an execution,
the set contains any $x$ such that $Items[i] = x$, $1 \leq i \leq Max-1$ and $TS[i] = 0$,
where, by abuse of notation $Max$ and $TS[i]$ denote the state of the objects.
In words, the item has been placed somewhere in the active region of $Items$ and nobody has taken the item.
The operations are linearized as follows.
Put operations are linearized at the step they write in $Items$,
take operations that return an item are linearized at the event they obtain $0$ from the objects in $TS$,
and take operation that return empty are linearized at its last step that reads $Max$.

\begin{algorithm}[tb]
\caption{\small From test\&set to fetch\&increment. Algorithm for proces $p_i$.}
\label{alg:from-ts-to-fi}
\begin{algorithmic}[1]\small
\Statex Shared variables:
\Statex  \hspace{0.2cm} $Items =$ infinite array of read/write objects, each initialized to $\bot$
\Statex  \hspace{0.2cm} $TS =$ infinite array of test\&set objects, each initialized to $0$
\Statex  \hspace{0.2cm} $Max =$ readable fetch\&increment object, initialized to $1$
\Statex

\Procedure{Put}{$x$}
	\State $max = Max.fetch\&increment()$
	\State $Items[max].write(x)$
	\State \Return $OK$
\EndProcedure

\Statex

\Procedure{Take}{}
	\State $taken\_old = 0$
	\State $max\_old = 0$
    \While{$true$}
		\State $taken\_new = 0$
		\State $max\_new = Max.read()-1$
		\For{$c \in \{1, \hdots , max\_new\}$}
			\State $x = Items[c].read()$
			\If{$x \neq \bot$}
				\If{$TS[c].test\&set() == 0$}
					\Return $x$
				\EndIf
			\EndIf
		\EndFor
		\If{$taken\_new == taken\_old \, \wedge \, max\_new == max\_old$}
			\Return $EMPTY$
		\EndIf
		\State $taken\_old = taken\_new$
		\State $max\_old = max\_new$
\EndWhile
\EndProcedure
\end{algorithmic}
\end{algorithm}

\begin{restatable}{theorem}{Set}
\label{thm-from-ts-to-set}
There is a lock-free strongly-linearizable implementation
of a set using test\&set.
\end{restatable}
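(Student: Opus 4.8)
The plan is to follow the linearization points stated just before the theorem and discharge the three requirements in turn---lock-freedom, validity of the induced sequential order, and prefix-closedness---while isolating the empty-returning \textsc{Take} as the delicate case. First I would record the structural invariants that make the abstract state $S=\{x : Items[i]=x\neq\bot,\ TS[i]=0,\ 1\le i\le Max-1\}$ well-behaved: each index $i$ is written at most once (distinct \textsf{fetch\&increment} results give distinct slots, and each item is the argument of at most one \textsc{Put}), so $Items[i]$ is monotone from $\bot$ to a value; each $TS[i]$ is monotone from $0$ to $1$; and $Max$ only increases, so if it has the same value at two times it is constant in between. Lock-freedom is then immediate: \textsc{Put} always terminates, and a \textsc{Take} loops only when $max\_new$ changes between two iterations, which requires a \textsf{fetch\&increment} by some \textsc{Put}, and every such \textsc{Put} completes; hence in any infinite execution infinitely many operations complete.

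Next I would dispatch the two easy linearization cases, where the chosen point is a step at which the response is already determined. A \textsc{Put}$(x)$ is placed at its $Items[max].\mathit{write}(x)$ step: since nobody calls $TS[max].\mathit{test\&set}()$ before $Items[max]\neq\bot$, we have $TS[max]=0$ there, so this step adds $x$ to $S$, matching the response $OK$. A \textsc{Take} returning $x$ is placed at the $TS[c].\mathit{test\&set}()$ that returns $0$; by write-once monotonicity $Items[c]=x$ still holds and $c\le Max-1$, so $x\in S$ immediately before the step and is removed exactly at it. Both points respect the real-time order and induce the correct transition on $S$, and because each is a fixed step whose outcome is settled there, this part of the linearization is automatically prefix-closed.

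The core of the argument is the empty-returning \textsc{Take}, linearized at its last read of $Max$ (value $v+1$, so $max\_new=v$). Here I would first prove \emph{emptiness at that point} by a double-collect argument: returning $EMPTY$ requires two consecutive iterations with the same $max\_new=v$ and the same $taken\_new$, and since the ``present-and-taken'' slots form a monotonically growing set, stability of the repeated-scan check forces the two scans to witness the \emph{same} taken slots in $[1,v]$; combined with write-once monotonicity of $Items$, every slot in $[1,v]$ is either $\bot$ or already taken throughout the window between the two scans, so $S=\emptyset$ at the last read of $Max$. With emptiness established, validity follows, since every linearization point lies between the invocation and response of its operation and the induced transitions on $S$ are faithful.

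The step I expect to be the main obstacle is \emph{prefix-closedness} for this empty case, because the linearization point precedes the step at which $EMPTY$ becomes determined, while concurrent operations may receive points in the interim. Concretely, a \textsc{Put} invisible to the \textsc{Take}---one writing a slot $>v$, or a slot $\le v$ only after the \textsc{Take} inspected it---can complete during the \textsc{Take}'s final scan, and one must show that such a \textsc{Put} always has its write step \emph{after} the last read of $Max$, hence is ordered after the \textsc{Take}, and that committing the \textsc{Take} at its return therefore forces no reordering of the already-fixed prefix. Making this precise is exactly the tension that renders $EMPTY$-returning operations incompatible with strong linearizability for queues and stacks (the impossibility result of Section~\ref{sec:imposs cons 2}), so the argument must lean on set-specific structure---unorderedness, together with the fact that invisible puts occupy fresh slots whose items were never in the inspected region---to push it through. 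Finally, the theorem follows by substituting the strongly-linearizable \textsf{fetch\&increment} of Theorem~\ref{thm-from-ts-to-fetch-inc} for $Max$ and composing, as in the earlier constructions.
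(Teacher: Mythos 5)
Your linearization points, abstract-state invariant, lock-freedom argument, and final composition with Theorem~\ref{thm-from-ts-to-fetch-inc} all coincide with the paper's proof, and your double-collect argument for the \emph{validity} of an $EMPTY$-returning \textsc{Take} is actually more detailed than what the paper offers (which simply asserts that every active slot is $\bot$ or taken at the last read of $Max$); note that your argument needs $taken\_new$ to really count the taken slots observed in a scan, which the printed pseudocode never increments. The genuine gap is exactly where you predict it: prefix-closedness for the $EMPTY$ case is flagged as ``the main obstacle'' but never discharged, and the resolution you sketch points the wrong way. You propose to show that an invisible \textsc{Put} has its write step \emph{after} the \textsc{Take}'s last read of $Max$, ``hence is ordered after the \textsc{Take},'' concluding that no reordering is forced. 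But that is precisely the configuration that forces a reordering: if such a \textsc{Put} \emph{completes} while the \textsc{Take} is still scanning, every linearization of that prefix must already contain the \textsc{Put}, and when the \textsc{Take} later returns $EMPTY$ it must be inserted \emph{before} the \textsc{Put} (the put's item is never removed). That is consistent with prefix-closedness only if the \textsc{Take} was already committed, as a pending operation with response $EMPTY$, by the time the \textsc{Put} completed.

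That commitment is unsound, because of reserved-but-unwritten slots. Let \textsc{Put}$(y)$ perform its \textsf{fetch\&increment}, reserving slot $1$, and stall before writing $Items[1]$; let a \textsc{Take} $T$ reach its last read of $Max$ with $max\_new=1$ and slot $1$ still $\bot$; then let \textsc{Put}$(x)$ grab slot $2$, write, and complete at a time $t''$ before $T$ reads $Items[1]$ in its final scan. If \textsc{Put}$(y)$ never writes, $T$ returns $EMPTY$ and must be linearized before \textsc{Put}$(x)$; if \textsc{Put}$(y)$ wakes up and writes before $T$ reads slot $1$, then $T$ returns $y$ and must be linearized after \textsc{Put}$(y)$, whose write follows $t''$. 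The two extensions of the prefix ending at $t''$ thus demand incompatible linearizations of that prefix---one must place $T$ with response $EMPTY$ before \textsc{Put}$(x)$, the other forbids $T$ from carrying that response at all---so no prefix-closed linearization function exists for the algorithm as written, and the set's unorderedness does not rescue it. The paper's own proof dismisses this with one sentence (``linearization points do not change in any extension''), so you have correctly located the soft spot; but your proposal does not close it, and closing it appears to require modifying the algorithm so that a reserved slot inside the scanned region cannot be written arbitrarily late, not merely sharpening the argument.
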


\begin{proof}
The implementation is lock-free because $Put(x)$ is clearly wait-free and the only reason
a $Take()$ of a correct process never terminates is because infinitely many put and take
operations are completed. We now argue that the operation is strongly linearizable.
For simplicity we assume that every item $x$ is input of a put operation at most once.\footnote{Otherwise the
implementation implements a multiset.}
First observe that, for every entry of $Items$, at most one put operation places its item in it, due to the specification of fetch\&increment.
Similarly, every item $x$ in $Items$ is returned by at most one take operation, due to the specification of test\&set.
The key idea in the strong linearizability proof is that, at any given moment in an execution,
the set contains any $x$ such that $Items[i] = x$, $1 \leq i \leq Max-1$ and $TS[i] = 0$,
where, by abuse of notation $Max$ and $TS[i]$ denote the state of the objects.
In words, the item has been placed somewhere in the active region of $Items$ and nobody has taken the item.
The operations are linearized as follows.
Put operations are linearized at the step they write in $Items$,
take operations that return an item are linearized at the event they obtain $0$ from the objects in $TS$,
and take operation that return empty are linearized at its last step that reads $Max$.
The linearization points define a sequential execution that respects the real-time order of operations
as every operation is linearized at one of its steps. The linearization is valid because, first, for a take operation
to return an item $x$, there must be a put operation before with input $x$, and second,
when a take operation returns empty, for every $1 \leq i \leq Max-1$, either $Items[i] = NULL$
or $Items[i] \neq NULL$ and $TS[i] = 1$.
Furthermore, the linearization points define a prefix-closed linearization function because
clearly linearization points do not change in any extension.

Finally, the theorem follows from the implementation just described and
the lock-free readable fetch\&increment implementation
in Theorem~\ref{thm-from-ts-to-fetch-inc}.
\end{proof}

\section{Impossibility of strongly-linearizable implementation from 2-consensus objects}
\label{sec:imposs cons 2}

We now show that lock-free strongly-linearizable implementations
of \emph{$k$-ordering} objects (defined below) imply solutions to
$k$-set agreement.
Recall that $k$-set agreement is a generalization of consensus
where processes are allowed to decide different proposed values
with the restriction that there are at most $k$ decided values.

More specifically, we show that $k$-set agreement among $n$ processes can be solved from
any lock-free strongly-linearizable $n$-process
implementation of a $k$-ordering object that uses readable base objects, namely, every base object provides a read operation.
The set agreement algorithm is inspired by the algorithm
of~\cite[Section~4, Figure~4]{AttiyaCH2018} that solves
consensus from any queue with \emph{universal helping}.
(Recall that, roughly speaking,
universal helping means eventually every pending operation is linearized.)
The algorithm we present here only assumes that read
operations return the current state of the base object.
In contrast, the algorithm in~\cite{AttiyaCH2018} assumes that read operations
return the \emph{complete history}, i.e.,
the full sequence of operations the base object has performed so far.

In what follows, for a sequential execution $\alpha$ of an object,
let $\alpha|i$ be the subsequence of $\alpha$ with the invocations and responses of $p_i$,
and let $invs(\alpha)$ and $resps(\alpha)$ be the subsequences of $\alpha$
with only invocations and responses, respectively.
For an object $O$, we let $Res(O)$ denote the set of responses of $O$,
and $Res(O)^+$ denote the set of all non-empty sequences over the elements of $Res(O)$.

Roughly speaking, an object is \emph{$k$-ordering} if for every process there is a pair of sequence
invocations, its \emph{proposal} sequence and its \emph{decision} sequence, and a \emph{decision}
function, such that the processes can solve $k$-set agreement by executing their proposal sequences
in a lock-free strongly-linearizable implementation of the object, where the decisions of the set agreement are made,
and then each process obtains its decision by locally simulating its decision sequence, and with the help of the decision function.

\begin{definition}
\label{def-agreement-objs}
A sequential object $O$ is \emph{$k$-ordering}, $1 \leq k \leq n-1$, if
there are~$2n$ non-empty sequences of invocations of $O$, $prop_0, dec_0, \hdots, prop_{n-1}, dec_{n-1}$
and a function $d: \{0, \hdots, n-1\} \times Res(O)^+ \rightarrow  \{0, \hdots, n-1\}$
such that, for every sequential execution $\alpha$ of $O$ with
\begin{itemize}
\item $invs(\alpha) \subseteq \bigcup^{n-1}_{x=0} invs(prop_x)$ and
\item $invs(\alpha|j) = prop_j$ for some $p_j$,
\end{itemize}
there is a set $S_\alpha$ with $k$ process indexes such that, for every process $p_i$,
for every sequential  execution $\alpha \cdot \alpha' \cdot  \beta_i$ of $O$ with
\begin{itemize}
\item $invs(\alpha \cdot \alpha') \subseteq \bigcup^{n-1}_{x=0} invs(prop_x)$,
\item $invs((\alpha \cdot \alpha')|i) = prop_i$ and
\item $invs(\beta_i) = dec_i$,
\end{itemize}
there is a process $p_\ell$ such that
\begin{itemize}
\item $invs((\alpha \cdot \alpha')|\ell) = prop_\ell$ and
\item $d(i, resps((\alpha \cdot \alpha')|i) \cdot resps(\beta_i)) = \ell \in S_\alpha$.
\end{itemize}
\end{definition}

\begin{lemma}
\label{lemma-consensus}
Let $A$ be a lock-free strongly-linearizable $n$-process implementation of a $k$-ordering object $O$,
on top of a system with readable base objects
where a read operation returns the current state of the object.
Then, $k$-set agreement among $n$ processes can be solved in the same system using a single instance of~$A$.
\end{lemma}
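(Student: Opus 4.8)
The plan is to reduce $k$-set agreement to a single run of the lock-free strongly-linearizable implementation $A$. Each process $p_i$, holding a proposal $v_i$ for the set agreement, first writes $v_i$ into a dedicated read/write register $\mathit{Input}[i]$, and then executes (on top of $A$) exactly the sequence of high-level invocations $prop_i$ specified by the $k$-ordering definition. Because $A$ is lock-free, every correct process completes its $prop_i$ sequence in finitely many steps; let $\alpha$ be a linearization (produced by the prefix-closed linearization function $L$ whose existence is guaranteed by strong linearizability) that contains all completed operations. The execution satisfies the hypotheses of Definition~\ref{def-agreement-objs}: $invs(\alpha) \subseteq \bigcup_x invs(prop_x)$ since each process only invokes its own proposal sequence, and $invs(\alpha|j) = prop_j$ for at least one correct $p_j$. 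Hence the set $S_\alpha$ of $k$ indexes is well defined, and the intended decision of $p_i$ will be to output $\mathit{Input}[\ell]$ for the index $\ell = d(i, \cdot)$ obtained below, which lies in $S_\alpha$; since $|S_\alpha| = k$, at most $k$ distinct values are decided, giving $k$-\textbf{Agreement}, while \textbf{Validity} holds because $\ell$ ranges over processes that actually proposed.

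To extract its decision, each process $p_i$ must learn the return values of its own $prop_i$ operations together with the return values it \emph{would} obtain from its decision sequence $dec_i$, and then apply $d$. The delicate point is that $p_i$ cannot actually execute $dec_i$ on the shared implementation $A$, because doing so would perturb the shared state and the linearization seen by other processes. The idea — following the simulation of~\cite{AttiyaCH2018} — is to have $p_i$ \textbf{locally simulate} $dec_i$ against a private copy of the base objects. This is exactly where the hypothesis that the base objects are \emph{readable} and that a read returns the \emph{current state} is used: after finishing $prop_i$, process $p_i$ reads the full state of every base object touched by $A$, thereby obtaining a snapshot of a configuration consistent with the real execution. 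It then runs $A$'s code for $dec_i$ starting from this snapshot, entirely in local memory, to compute the response sequence $resps(\beta_i)$. The central claim is that the locally simulated execution $\alpha \cdot \alpha' \cdot \beta_i$ (where $\alpha'$ accounts for operations linearized between $p_i$'s reads and its snapshot) is itself a valid sequential execution of $O$ meeting the three bullet conditions of the definition, so that Definition~\ref{def-agreement-objs} supplies a process $p_\ell$ with $d(i, resps((\alpha\cdot\alpha')|i)\cdot resps(\beta_i)) = \ell \in S_\alpha$.

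The main obstacle is justifying that the local simulation is \emph{faithful}: that the values $p_i$ computes by replaying $dec_i$ over its local snapshot coincide with the values it would have obtained had $dec_i$ been appended to the real execution after some prefix. Here strong linearizability does the essential work. Because $L$ is prefix-closed, the linearization of any prefix of the real execution is itself a prefix of $\alpha$'s linearization; in particular the portion of the linearization already fixed at the moment $p_i$ snapshots the base objects cannot later be reordered. Consequently the snapshot determines a genuine high-level configuration, and appending the simulated $dec_i$ corresponds to some legitimate sequential extension $\alpha \cdot \alpha' \cdot \beta_i$, with $\alpha'$ being the (already-linearized) operations that the snapshot reflects beyond $\alpha$. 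I would make this precise by arguing that the state read from the base objects uniquely pins down the linearized sequence of high-level operations, so that the read responses $resps((\alpha\cdot\alpha')|i)$ and the simulated $resps(\beta_i)$ are exactly the arguments on which $d$ is guaranteed to return a valid index in $S_\alpha$. Mere linearizability would not suffice, since the linearization of a prefix could be revised later, destroying the correspondence between the snapshot and a fixed sequential execution.

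Having established faithfulness, the remaining steps are routine: \textbf{Termination} follows from lock-freedom of $A$ and the finiteness of $prop_i$, $dec_i$; each correct $p_i$ computes $\ell = d(i,\cdot)$, reads $\mathit{Input}[\ell]$, and decides that value. Validity and $k$-Agreement follow as noted from $\ell \in S_\alpha$ and $|S_\alpha| = k$. This proves that a single instance of $A$ solves $k$-set agreement among $n$ processes.
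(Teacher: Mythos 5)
Your overall reduction is the right one and matches the paper's: each process publishes its input, runs $prop_i$ on the shared instance of $A$, captures the states of the base objects, locally simulates $dec_i$ from that captured state, and decides the input of the process index returned by $d$; and you correctly locate the role of strong linearizability (the prefix-closed linearization of the common prefix fixes $S_\alpha$ once and for all, which mere linearizability would not guarantee). However, there is a genuine gap at the step you dispatch with ``reads the full state of every base object \ldots thereby obtaining a snapshot of a configuration consistent with the real execution.'' Reading the base objects one by one is not atomic: while $p_i$ is collecting, other processes may still be taking steps of $A$, so the vector of values $p_i$ assembles need not be the state of \emph{any} reachable configuration. A local simulation of $dec_i$ started from such an inconsistent vector can produce a response sequence $resps(\beta_i)$ that corresponds to no legal sequential extension $\alpha\cdot\alpha'\cdot\beta_i$, and then Definition~\ref{def-agreement-objs} gives you nothing. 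Strong linearizability does not rescue this step either --- prefix-closure constrains the linearization of actual executions, but it cannot turn an inconsistent collection of base-object values into a configuration.

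The paper closes exactly this hole with a double-collect mechanism: every process increments a dedicated counter $T[i]$ immediately before each step it takes on $A$, and a deciding process repeatedly performs collect($T$), collect($R$), collect($T$) until the two collects of $T$ agree, which bounds the interference during the collect of $R$ to at most one step per other process; Claim~\ref{claim-snapshots} then reorders and truncates the execution (using asynchrony) to exhibit an execution $E^*_i$ extending the common prefix $E'|A$ whose final configuration has precisely the collected states. That claim is the technical heart of the lemma, and your proposal needs an analogous argument (or some other mechanism for obtaining a consistent snapshot of the base objects) before the rest --- termination from lock-freedom, validity, and $k$-agreement from the shared $S_\alpha$ --- goes through as you describe.
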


\begin{proof}
We present an algorithm $B$ that solves $k$-set agreement among $n$ processes using $A$.
Let $prop_0, \hdots, prop_{n-1}, dec_0, \hdots, dec_{n-1}$ be the sequences of invocations and $d$ be the function
guaranteed to exist due to the assumption that $O$ is a $k$-ordering object.
The idea of the algorithm is simple.
Every process $p_i$ performs on $A$ all its invocations in $prop_i$.
Since $A$ is strongly linearizable, at the end of its operations,
the order of the operations is fixed. Then $p_i$ takes a snapshot
of $A$'s base objects, which is possible since base objects are readable,
and finally it locally executes all its invocations in $dec_i$ and, with the help of function $d$,
obtains one of the winners of the $k$-set agreement.

More concretely, in algorithm $B$, the processes use two shared arrays $M$ and $T$ of length~$n$ each,
with every entry initialized to $\bot$,
and the set, denoted $R$, with all base objects accessed in all executions of $A$ where every process $p_i$
executes some or all invocations in $prop_i$, in the order specified in the sequence.
Observe that $R$ is finite as there are finitely many such executions, since $A$ is lock-free and each $prop_i$ is finite.
Also, the processes use a function $collect(S)$, where $S$ is either a set or an array,
that reads one by one, in any arbitrary order,
the base objects in $S$.
In $B$, process $p_i$ with input $x$ does:
\begin{enumerate}

\item Set a local variable $t$ to $0$.

\item Execute $M[i].write(x)$.

\item Execute one by one, until completion and  in order, every invocation in $prop_i$ as follows:

\begin{enumerate}

\item $t \leftarrow t+1$.

\item $T[i].write(t)$.

\item Execute the next step of $p_i$ in the current invocation of $prop_i$, as dictated by the algorithm.

\end{enumerate}

\item do

\begin{enumerate}

\item $t_1 \leftarrow collect(T)$

\item $r \leftarrow collect(R)$

\item $t_2 \leftarrow collect(T)$

\end{enumerate}

\item while $\exists j, t_1[j] \neq t_2[j]$

\item Starting from the states of base objects in~$r$, simulate in order and until completion
all invocations in $dec_i$.

\item Return $M[d(i, resps)].read()$, where $resps$ is the sequence of response from $A$ obtained in Steps 3 and 6.
\end{enumerate}

We prove that $B$ satisfies termination, validity and $k$-agreement.
Below, for any execution $F$ of $B$,
let $F|A$ denote the execution of $A$ in $F$, i.e., the subsequence of $F$ with the steps of $A$.

First observe that any correct process terminates Step~3 of $B$ because $A$ is lock-free and
every process executes at finite number of operations.
Thus, we have that, in every execution, every process either completes Step~3 or crashes.
Note that this implies that
every correct process terminates the loop in Steps~4 and~5.

For the rest of the proof, fix en execution $E$ of $B$ with at least one correct process,
and let $E'$ be the shortest prefix of $E$ where a process terminates its operations in Step~3.
For a process $p_i$ that completes its loop in Steps~4 and~5,
let $E_i$ the shortest prefix of $E$ where $p_i$ completes the loop.
Hence~$E'$ is a prefix of~$E_i$.
We show that the states the base objects in $r$
that $p_i$ collects at the end of $E_i$ are a snapshot of $R$, in some extension
of $E'|A$, possibly distinct to $E_i|A$.
As explained later, this property implies termination, validity and $k$-agreement.

\begin{restatable}{claim}{Snapshots}
\label{claim-snapshots}
For every process $p_i$  that completes its loop in Steps~4 and~5 in~$E$,
there in an execution $E^*_i$ of $B$ such that
(1) $E^*_i|A$ is an extension of~$E'|A$,
(2) $E^*_i|A$ has only invocations that appear in the sequences $prop_0, \dots, prop_{n-1}$,
(3) $p_i$~has no pending operation in $E^*_i|A$, and
(4) the states in $r$ at the end of $E_i$ are a snapshot of $R$,
namely, they are the states
of the base objects in $R$ in the configuration at the end of~$E_i$.
\end{restatable}

\begin{proof}[Proof of the claim.]
For simplicity, and without loos of generality, let us assume that $p_i$ is the only process that
executes steps of the loop in Steps~4 and~5 in~$E$ (or alternatively, remove from $E$ all steps corresponding
to the loop of any process distinct from $p_i$).
Let $F$ the shortest prefix of $E$ such that $p_i$ completes the loop.

Intuitively, we will focus on the last iteration of the loop, and modify $F$ so that
the states in $r$ are a snapshot.
We start with the following observation. Let $e_1$ and $e_2$ be the last and first reads of the last two collects
of $p_i$ of $T$ in $F$, respectively (which are steps of $p_i$'s last iteration of the loop).
Observe that every process distinct from $p_i$ takes \emph{at most} one step (of $A$) between
$e_1$ and $e_2$ in $F$. The reason is that, in $B$, every process increments its entry in~$T$ before taking a step of $A$.
Let $H$ be the subsequence of $F$ with the steps between $e_1$ and $e_2$.
For each base object $X \in R$, let $r_X$ be the step of $p_i$ in $H$ that reads $X$,
and let $\beta_X$ be the subsequence of $H$ with the steps accessing $X$ after $r_X$.
Consider the sequence $E^*_i$ obtained from $F$ by
(1)~removing every $\beta_X$,
(2)~removing every step of a process distinct from $p_i$ that appears after $e_2$, and
(3)~``moving forward" all $r_X$, $X \in R$, just before $e_2$.

We claim that $E^*_i$ is an execution of $B$. For each process distinct from $p_i$,
it last steps of $A$ is removed, if any, which is possible due to asynchrony.
Due to asynchrony too, the reads $r_X$ of $p_i$ can be ``postponed" just before $e_2$.
Also note that the states of the objects in $R$ in the configuration at the end of $E^*_i$ are precisely those in $r$,
and $E'|A$ is a prefix of $E^*_i|A$, as $F$ and $E^*_i$ are equal up to $e_1$.
Finally, it is easy to see that $E^*_i|A$  has only invocations that appear in the sequences $prop_0, \dots, prop_{n-1}$
and $p_i$ has no pending operation in it. The claim follows.
 \end{proof}

We now prove termination, validity and $k$-agreement,
showing that $B$ solves $k$-set agreement.

Termination. Let $p_i$ be a correct process in $E$.
As already argued, $p_i$ complete its loop in Steps~4 and~5.
Consider an execution $E^*_i|A$ of $A$ as stated in Claim~\ref{claim-snapshots}.
As $r$ contains the states of the objects in $R$ at the end of $E^*_i|A$
and $p_i$ has no pending operation,
indeed $p_i$ is able to locally simulate a solo execution that extends $E^*_i|A$, in Step~6.
The local simulation terminates because $A$ is lock-free.
Thus, $p_i$ eventually makes a decision in Step~7.

Validity. For the rest of the proof, let $f$ be a prefix-closed linearization function of $A$
and let $\alpha = f(E'|A)$.
As  $E'|A$ is a prefix of $E^*_i|A$, we have that $\alpha$ is a prefix of $f(E^*_i|A)$,
hence $f(E^*_i|A)$ can be written $\alpha \cdot \alpha'$, for some $\alpha'$.
By definition of $E'|A$, we have $invs(\alpha|j) = prop_j$, for some process $p_j$.
Also observe that $invs((\alpha \cdot \alpha')|i) = prop_i$.
Let $F_i$ be the (solo) extension of $E^*_i|A$ that $p_i$ locally simulates in Step~6.
Note that $f(F_i)$ can be written $\alpha \cdot \alpha' \cdot \beta_i$,
for some $\beta_i$ with $invs(\beta_i) = dec_i$.
Let $\ell = d(i, resps(\alpha \cdot \alpha')|i) \cdot resps(\beta_i))$.
The properties of $d$ guarantee that $invs((\alpha. \cdot \alpha')|\ell) = prop_\ell$, which implies
that $p_\ell$ completes the loop in Step~3 in $E^*_i|A$, and hence the proposal of $p_\ell$
is in $M$ at the end of $E^*_i|A$. Hence, $p_i$ decides the input of $p_\ell$.

$k$-Agreement. By definition of $d$, there is a set $S_\alpha$ with at most
$k$ indexes of processes such that
$d(i, resps(\alpha \cdot \alpha')|i) \cdot resps(\beta_i)) \in S_\alpha$.
The set $S_\alpha$ depends on $\alpha = f(E'|A)$ and
$E'|A$ is prefix of all extensions the processes locally simulate.
Thus, there are at most $k$ distinct decisions.
\end{proof}

We thus have the following:

\begin{corollary}
Among systems with readable base objects, $n$-process lock-free strongly-linearizable implementations of
$k$-ordering objects can exists only in those in which $k$-set agreement among $n$ processes can be solved.
\end{corollary}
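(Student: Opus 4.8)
The plan is to observe that this corollary is simply the contrapositive repackaging of Lemma~\ref{lemma-consensus}, so essentially all the work has already been done. First I would fix an arbitrary system $\Sigma$ whose base objects are readable (each returning its current state on a read), and suppose that $\Sigma$ admits some $n$-process lock-free strongly-linearizable implementation $A$ of a $k$-ordering object $O$. Lemma~\ref{lemma-consensus} then directly furnishes an algorithm $B$ that solves $k$-set agreement among $n$ processes using a single instance of $A$. Since $A$ is itself built solely from the base objects of $\Sigma$, the algorithm $B$ lives entirely within $\Sigma$; hence $\Sigma$ is a system in which $k$-set agreement among $n$ processes is solvable.

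Taking the contrapositive yields exactly the stated form: if $\Sigma$ is a system with readable base objects in which $k$-set agreement among $n$ processes \emph{cannot} be solved, then $\Sigma$ admits no $n$-process lock-free strongly-linearizable implementation of any $k$-ordering object. There is no separate calculation to carry out here and no genuine obstacle, because the construction of $B$ together with its termination, validity, and $k$-agreement guarantees is precisely the content of the proof of Lemma~\ref{lemma-consensus}. The only point I would state carefully is that $B$ uses just one instance of $A$ and no primitives beyond those already available in $\Sigma$, so the solvability of $k$-set agreement is witnessed \emph{within the same system}, which is what the phrase ``those in which $k$-set agreement can be solved'' refers to.
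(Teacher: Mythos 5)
Your proposal is correct and matches the paper exactly: the paper derives this corollary directly from Lemma~\ref{lemma-consensus} with no additional argument (it is introduced with ``We thus have the following''), and your contrapositive reading, together with the observation that the set-agreement algorithm $B$ uses only one instance of $A$ and primitives already available in the system, is precisely the intended justification.
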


The following are examples of ordering objects:

\begin{itemize}

\item Queues are 1-agreement objects.
For each process $p_i$, $prop_i = enq(i)$, $dec_i = deq()$ and $d(i, OK \cdot \ell) = \ell$.
Consider any sequential execution $\alpha$ of the queue
with invocations in  $prop_0, \hdots, prop_{n-1}$ and $invs(\alpha|j) = prop_j$, for some process $p_j$.
Let $S_\alpha = \{\ell\}$ be the singleton set with the input (process index) of the first enqueue in $\alpha$.
Note that $invs(\alpha|\ell) = prop_\ell$.
For a process $p_i$, consider any sequential execution $\alpha \cdot \alpha' \cdot \beta_i$ of the
queue such that all invocation of $\alpha \cdot \alpha'$ appear in $prop_0, \hdots, prop_{n-1}$,
$invs((\alpha \cdot \alpha')|i) = prop_i$ and $invs(\beta_i) = dec_i$.
We have that $invs(\alpha \cdot \alpha'|\ell) = prop_\ell$.
Note that $\beta_i$ has a single dequeue that returns the input of the first enqueue in $\alpha \cdot \alpha'$,
which is $\ell \in S_\alpha$. Moreover, $resps((\alpha \cdot \alpha')|i) \cdot resps(\beta_i) = OK \cdot \ell$,
and hence $d(i, resps((\alpha \cdot \alpha')|i) \cdot resps(\beta_i)) = \ell \in S_\alpha$.\\

\item Stacks are 1-agreement objects.
For each process $p_i$, $prop_i = push(i)$, $dec_i = pop() \cdot \hdots \cdot pop()$ ($n+1$ times pop)
and $d(i, OK \cdot j_0 \cdot \hdots  \cdot  j_n) = j_x$,
where $j_x$ is the non-$\epsilon$ element of the sequence with largest subindex.
Consider any sequential execution $\alpha$ of the stack
with invocations in  $prop_0, \hdots, prop_{n-1}$ and $invs(\alpha|j) = prop_j$, for some process $p_j$.
Let $S_\alpha = \{\ell\}$ be the singleton set with the input (process index) of the first push in $\alpha$.
Note that $invs(\alpha|\ell) = prop_\ell$.
For a process $p_i$, consider any sequential execution $\alpha \cdot \alpha' \cdot \beta_i$ of the
stack such that all invocation of $\alpha \cdot \alpha'$ appear in $prop_0, \hdots, prop_{n-1}$,
$invs((\alpha \cdot \alpha')|i) = prop_i$ and $invs(\beta_i) = dec_i$.
We have that $invs(\alpha \cdot \alpha'|\ell) = prop_\ell$.
Observe that $\alpha \cdot \alpha'$ has at most $n$ push operations, and hence
among the $n+1$ pop operations in $\beta_i$, one of them returns $\epsilon$,
and all subsequent operations return $\epsilon$ too.
Moreover, notice that the last non-$\epsilon$ return value is $\ell \in S_\alpha$,
and thus $resps((\alpha \cdot \alpha')|i) \cdot resps(\beta_i) =
OK \cdot j_0 (\neq \epsilon) \cdot j_1 (\neq \epsilon) \cdot \hdots \cdot \ell \cdot \epsilon \cdot \hdots \cdot \epsilon$,
and hence $d(i, resps((\alpha \cdot \alpha')|i) \cdot resps(\beta_i)) = \ell \in S_\alpha$.\\

\item Queues and stacks with multiplicity~\cite{CastanedaRR23} are 1-agreement objects.
These are relaxations of queues and stacks where concurrent dequeue or pop operations, respectively,
are allowed to return the same item in the structure, and these operations are linearized
one after consecutively.\footnote{Queues and stacks with multiplicity are defined in~\cite{CastanedaRR23}
through set linearizability, a generalization of linearizability where concurrent operations might be linearized
together at the same linearization point. In these relaxations concurrent dequeues/pops returning the same
item are linearized together. Here we opt for the alternative definition using linearizability, with the
constraint that the relaxation can happen only in case of concurrent operations, a property
that is inherent to the set linearizable version.}
Since the relaxation can happen only if there are concurrent enqueue/pop operations,
the respective sequences and functions for queues and stacks above
work for queues and stacks with multiplicity too.

\item $m$-Stuttering queues~\cite{HenzingerKPSS13} are 1-agreement objects.
This is a relaxation of queues where operations might have no effect, i.e.,
they do not change the state of the object. For example,
an enqueue operation might return $OK$, although its item is not enqueued.
Similarly, a dequeue returns the oldest item in the queue, although it is not removed.
We consider stuttering where  enqueues and dequeues are treated independently,
and the relaxation might happen up to $m \geq 1$ times in each case, and once an operation
does have effect, the relaxation might occur again.\footnote{Formally,
the state of the object has a counter per operation type, and if the corresponding counter is less than $m$,
the objects non-deterministically decides whether  the operations has effect or not, and if it takes effect, the counter is set to zero.}
Thus, at least one out of $m+1$ consecutive operations of the same type is guaranteed to have effect.

For each process $p_i$, $prop_i = enq(i) \cdot \hdots \cdot enq(i)$ ($m+1$ times enqueue),
$dec_i = deq()$ and $d(i, OK \cdot \hdots \cdot OK \cdot \ell) = \ell$ ($m+1$ times $OK$).
Consider any sequential execution $\alpha$ of the stuttering queue
with invocations in  $prop_0, \hdots, prop_{n-1}$ and $invs(\alpha|j) = prop_j$, for some process $p_j$.
Note that at least one enqueue in $\alpha$ indeed enqueued its item, thus the state of the queue
at the end of $\alpha$ is not empty.
Let $S_\alpha = \{\ell\}$ be the singleton set with the input (process index) of the first enqueue in $\alpha$
that \emph{has effect}.
Note that $invs(\alpha|\ell) \neq \epsilon$ (an possibly $\neq prop_\ell$).
For a process $p_i$, consider any sequential execution $\alpha \cdot \alpha' \cdot \beta_i$ of the
stuttering queue such that all invocation of $\alpha \cdot \alpha'$ appear in $prop_0, \hdots, prop_{n-1}$,
$invs((\alpha \cdot \alpha')|i) = prop_i$ and $invs(\beta_i) = dec_i$.
We have that $invs(\alpha \cdot \alpha'|\ell) \neq \epsilon$.
Note that $\beta_i$ has a single dequeue that returns the input of the first enqueue in $\alpha \cdot \alpha'$ that has effect,
which is $\ell \in S_\alpha$. Moreover, $resps((\alpha \cdot \alpha')|i) \cdot resps(\beta_i) = OK \cdot \hdots \cdot OK \cdot \ell$ ($m+1$ times $OK$),
and hence $d(i, resps((\alpha \cdot \alpha')|i) \cdot resps(\beta_i)) = \ell \in S_\alpha$.\\

\item $m$-Stuttering stacks~\cite{HenzingerKPSS13} are 1-agreement objects.
This relaxation of stacks are defined similarly. Operation types are treated separately, and
the relaxation might happen up to $m \geq 1$ times in each case.

For each process $p_i$, $prop_i = push(i) \cdot \hdots \cdot push(i)$ ($m+1$ times push),
$dec_i = pop() \cdot \hdots \cdot pop()$ ($n (m+1)+1$ times pop)
and $d(i, OK \cdot \hdots \cdot OK \cdot  j_1 \cdot \hdots  \cdot  j_{n(m+1)+1}) = j_x$ ($m+1$ times $OK$),
where $j_x$ is the non-$\epsilon$ element of the sequence with largest subindex.
Consider any sequential execution $\alpha$ of the stuttering stack
with invocations in  $prop_0, \hdots, prop_{n-1}$ and $invs(\alpha|j) = prop_j$, for some process $p_j$.
Note that at least one push in $\alpha$ indeed pushed its item, thus the state of the stack
at the end of $\alpha$ is not empty.
Let $S_\alpha = \{\ell\}$ be the singleton set with the input (process index) of the first push in $\alpha$ that \emph{has effect}.
Notice $invs(\alpha|\ell) \neq \epsilon$.
For a process $p_i$, consider any sequential execution $\alpha \cdot \alpha' \cdot \beta_i$ of the
stuttering stack such that all invocation of $\alpha \cdot \alpha'$ appear in $prop_0, \hdots, prop_{n-1}$,
$invs((\alpha \cdot \alpha')|i) = prop_i$ and $invs(\beta_i) = dec_i$.
Note that $invs(\alpha \cdot \alpha'|\ell) \neq \epsilon$.
Observe that $\alpha \cdot \alpha'$ has at most $n(m+1)$ push operations that have effect, and hence
among the $n(m+1)+1$ pop operations in $\beta_i$, one of them returns $\epsilon$,
and all subsequent operations return $\epsilon$ too.
Moreover, notice that the last non-$\epsilon$ return value is $\ell \in S_\alpha$,
and thus $resps((\alpha \cdot \alpha')|i) \cdot resps(\beta_i) =
OK \cdot \hdots \cdot OK \cdot j_0 (\neq \epsilon) \cdot j_1 (\neq \epsilon) \cdot \hdots \cdot \ell \cdot \epsilon \cdot \hdots \cdot \epsilon$,
and hence $d(i, resps((\alpha \cdot \alpha')|i) \cdot resps(\beta_i)) = \ell \in S_\alpha$.\\

\item $k$-Out-of-Order queues~\cite{HenzingerKPSS13} are $k$-ordering objects, $1 \leq k \leq n-1$.
This is a relaxation of queues where a dequeue operation returns one of the $k$ oldest items in the queue
(hence $1$-out-of-order queues are just regular queues).

For each process $p_i$, $prop_i = enq(i)$, $dec_i = deq()$ and $d(i, OK \cdot \ell) = \ell$.
Consider any sequential execution $\alpha$ of the $k$-out-of-order queue
with invocations in  $prop_0, \hdots, prop_{n-1}$ and $invs(\alpha|j) = prop_j$, for some process $p_j$.
Let $S_\alpha$ be the set with the input (process index) of the first $k$ enqueues in $\alpha$.
Note that $1 \leq |S_\alpha| \leq k$. Also note that $invs(\alpha|\ell) = prop_\ell$, for each $\ell \in S_\alpha$.
For a process $p_i$, consider any sequential execution $\alpha \cdot \alpha' \cdot \beta_i$ of the
$k$-out-of-order queue such that all invocation of $\alpha \cdot \alpha'$ appear in $prop_0, \hdots, prop_{n-1}$,
$invs((\alpha \cdot \alpha')|i) = prop_i$ and $invs(\beta_i) = dec_i$.
We have that $invs(\alpha \cdot \alpha'|\ell) = prop_\ell$, for each $\ell \in S_\alpha$.
Notice $\beta_i$ has a single dequeue that returns the input of one of the first $k$ enqueues in $\alpha \cdot \alpha'$,
which belongs to $S_\alpha$. Moreover, $resps((\alpha \cdot \alpha')|i) \cdot resps(\beta_i) = OK \cdot \ell$,
and hence $d(i, resps((\alpha \cdot \alpha')|i) \cdot resps(\beta_i)) = \ell \in S_\alpha$.\\

\end{itemize}

We consider now systems where each base objects belong to the class of \emph{interfering} objects~\cite[Section~3.2]{H91}
(see also~\cite[Section~5.7]{HS08book}); this class of objects provide a combination of
read, write, test\&set, swap and fetch\&add operations, among others.
Theorem~4 in~\cite[Section~3.2]{H91} shows that it is impossible to solve three-process consensus
using interfering objects. This theorem implies:

\begin{corollary}
\label{coro-no-cons-read-2-RMW}
It is impossible to solve consensus among three or more processes using readable base objects of the type
test\&set, swap and fetch\&add.
\end{corollary}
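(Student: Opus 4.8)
The plan is to derive the corollary directly from Theorem~4 in~\cite[Section~3.2]{H91}, once we check that each readable object in question---a readable {\sf test\&set}, a readable {\sf swap}, and a readable {\sf fetch\&add}---is an \emph{interfering} object in the sense of~\cite[Section~3.2]{H91}. Recall that such an object is a single memory cell supporting a read operation together with a family of read-modify-write (RMW) operations, where each RMW operation atomically replaces the cell value $v$ by $f(v)$ for some state-transition function $f$ and returns a response determined by $v$; the object is interfering if every pair of transition functions $f,g$ either \emph{commutes}, $f(g(v)) = g(f(v))$ for all $v$, or one \emph{overwrites} the other, $f(g(v)) = f(v)$ for all $v$ (or symmetrically). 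First I would dispense with the read operation: its transition function is the identity $\mathrm{id}(v) = v$, so $\mathrm{id}(f(v)) = f(v) = f(\mathrm{id}(v))$ for every $f$, i.e.\ a read commutes on the cell state with every operation. Hence enriching any of the three objects with a read cannot break the interfering property, and it suffices to examine the genuine RMW operations.

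Next I would verify each object type in turn. For {\sf test\&set} the transition function is the constant $t(v) = 1$, so $t(t(v)) = t(1) = 1 = t(v)$, and two {\sf test\&set} operations overwrite each other. For {\sf swap} the transition function of {\sf swap}($w$) is the constant $s_w(v) = w$, so $s_w(s_{w'}(v)) = w = s_w(v)$, and {\sf swap} operations overwrite one another. For {\sf fetch\&add} the transition function of {\sf fetch\&add}($a$) is $h_a(v) = v + a$, so $h_a(h_b(v)) = v + a + b = h_b(h_a(v))$, and any two {\sf fetch\&add} operations commute. Thus in every case the operation set of the object---a read together with one RMW family---is interfering.

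Finally, I would invoke Theorem~4 in~\cite[Section~3.2]{H91}. Because the system contains only objects of these three homogeneous types, in any critical configuration of a hypothetical consensus protocol the two processes poised to take conflicting steps either access the same base object---which is then one of the interfering objects just analyzed---or access distinct objects, in which case their steps trivially commute; this is exactly the situation handled by that theorem, which forbids three-process consensus. Since the impossibility of three-process consensus yields consensus number at most $2$, no protocol exists for $n \geq 3$ processes, establishing the corollary. I expect the only delicate point to be the treatment of the read operation, which is resolved by its triviality; a further check would be required only if one allowed a single cell to expose several RMW families at once, in which case one would also verify the mixed pairs (for instance, $t$ and $s_w$ each overwrite $h_a$, and $t$ overwrites $s_w$), but under the stated homogeneous typing the same-family verifications above suffice.
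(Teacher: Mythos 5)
Your proposal is correct and follows essentially the same route as the paper: the paper derives the corollary directly from Theorem~4 of Herlihy (Section~3.2), observing that readable {\sf test\&set}, {\sf swap} and {\sf fetch\&add} fall into the class of interfering objects, which is exactly your reduction. The only difference is that you explicitly verify the commute/overwrite conditions for each transition function (and for the read, via the identity), a check the paper leaves implicit by citing Herlihy's list of interfering operation combinations.
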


The following lemma observes that any implementation that is strongly linearizable, remains so
in a system where the same base objects are readable.

\begin{restatable}{lemma}{StrongReadable}
\label{lemma-strong-readable}
Let $A$ be a strong linearizable algorithm that implements a type $T$ in a system with base objects in set $S$.
Then, $A$ is a strong linearizable implementation of $T$ in the system where every object in $S$ is readable,
and satisfying the same progress properties.
\end{restatable}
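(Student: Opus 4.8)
The plan is to exploit the fact that the algorithm $A$ was written for the system with base objects in $S$, so its process-local state machines $\mathcal{A}_p$ only ever invoke operations belonging to the interfaces of the objects in $S$. Making an object \emph{readable} is a conservative extension: the readable version has the same set of states and the same transition relation on all of its original operations, and merely adds a read operation that returns the current state without modifying it. Since $A$ was specified before such read operations were available, no step of any $\mathcal{A}_p$ invokes one of the \emph{newly added} read operations (it may, of course, use read operations that already belonged to objects of $S$, but those behave identically in both systems).

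First I would make precise the claim that $A$ has exactly the same executions in both systems. A configuration records the states of the base objects and of the processes; because a readable object has the same state space as its non-readable counterpart, the configurations of the two systems coincide. Moreover, every step of $A$ is either an invocation/response of a high-level operation of $T$, or an invocation of an operation of some object of $S$ that is not a newly added read; all such operations have identical semantics in both systems. Hence, by a straightforward induction on execution length, the set of executions of $A$ from an initial configuration is the same, sequence for sequence, in the system with base objects $S$ and in the system where every object of $S$ is readable.

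With the two execution sets identified, the rest is immediate. Let $L$ be a prefix-closed linearization function witnessing that $A$ is strongly linearizable in the system with base objects $S$. Since every execution of $A$ in the readable system is verbatim an execution of $A$ in $S$, $L$ is already defined on it and assigns it a correct linearization; prefix-closedness carries over because the prefix relation among executions is the same in both systems. Thus $L$ witnesses strong linearizability of $A$ in the readable system as well.

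Finally, the progress guarantees are preserved because wait-freedom and lock-freedom are phrased purely in terms of the step structure of infinite executions (how many operations complete when a process, respectively some process, takes infinitely many steps), and this structure is unchanged under the identification above. The only delicate point is the step-for-step matching of executions, which rests entirely on fixing the right notion of \emph{readable} as a conservative extension; once that definition is in place, the argument is essentially bookkeeping and I do not anticipate a genuine obstacle.
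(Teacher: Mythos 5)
Your proposal is correct and follows the same route as the paper's own proof: identify the executions of $A$ in the two systems (since $A$ never invokes the newly added read operations), then reuse the same prefix-closed linearization function to witness strong linearizability in the readable system. You merely spell out the execution-identification step in more detail than the paper does.
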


\begin{proof}
First note that $A$ is a linearizable implementation of $T$ in readable $S$ because,
by assumption, it is a linearizable implementation of $T$ in $S$,
and the only difference between the models is that the base objects in readable $S$ provide
read operations. Thus, any execution of $A$ in one of the models is an execution of $A$ in the other model.

Let $f$ be any prefix-closed linearization function of $A$ in $S$,
whose existence is guaranteed due to the fact that $A$ is strongly linearizable.
Consider any execution of $A$ in readable $S$. As explained,
$E$ is an execution of $A$ in $S$. We have that $f(E)$ is a linearization of $E$.
As $f$ is prefix-closed, we thus have that $A$ is strongly linearizable.
\end{proof}

We can now prove the impossibility of strongly-linearizable implementations of $1$-agreement objects.

\begin{theorem}
For $n \geq 3$, there is no lock-free strongly-linearizable implementation of a $1$-agreement object
using (non-readable) test\&set, swap and fetch\&add.
\end{theorem}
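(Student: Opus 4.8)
The plan is to argue by contradiction, chaining together the three auxiliary results established immediately above. Suppose, for some $n \geq 3$, that there were a lock-free strongly-linearizable $n$-process implementation $A$ of a $1$-agreement (i.e., $1$-ordering) object $O$ whose only base objects are non-readable test\&set, swap and fetch\&add. From this I would manufacture a consensus algorithm for $n$ processes in a model already known to be too weak for consensus, which is the desired contradiction.

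The first step is to pass from non-readable to readable base objects for free. By Lemma~\ref{lemma-strong-readable}, the very same algorithm $A$ is a lock-free strongly-linearizable implementation of $O$ in the system where each of test\&set, swap and fetch\&add is additionally readable, with a read returning the current state of the object. This is immediate from that lemma: enriching the base objects with read operations only adds possible steps, $A$ itself never issues such reads, so every execution of $A$ is simultaneously an execution in both systems, and the prefix-closed linearization function transfers verbatim together with the progress property.

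The second step invokes Lemma~\ref{lemma-consensus} with $k = 1$. Since $O$ is a $1$-ordering object and, by the previous step, $A$ is a lock-free strongly-linearizable implementation of $O$ on top of readable base objects whose reads return the current state, the lemma yields an algorithm that solves $1$-set agreement---that is, consensus---among $n$ processes using a single instance of $A$, and therefore using only readable test\&set, swap and fetch\&add.

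The final step is the contradiction: Corollary~\ref{coro-no-cons-read-2-RMW} asserts that consensus among three or more processes cannot be solved using readable base objects of type test\&set, swap and fetch\&add, and since $n \geq 3$ this directly contradicts the algorithm obtained in the second step. Hence no such $A$ can exist. I do not anticipate a genuine obstacle, as all of the real work is front-loaded into the $k$-ordering simulation and the interfering-objects impossibility; the only point demanding care is the \emph{order} of the reductions, namely to apply Lemma~\ref{lemma-strong-readable} \emph{before} Lemma~\ref{lemma-consensus}, so that the consensus simulation---which crucially inspects the current state of the base objects through reads---is carried out in a legitimate model.
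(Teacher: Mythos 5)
Your proof is correct and follows exactly the same route as the paper's: apply Lemma~\ref{lemma-strong-readable} to make the base objects readable, then Lemma~\ref{lemma-consensus} with $k=1$ to obtain an $n$-process consensus algorithm, and contradict Corollary~\ref{coro-no-cons-read-2-RMW}. Your remark about applying the readability lemma \emph{before} the consensus reduction is exactly the ordering the paper uses as well.
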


\begin{proof}
By contradiction, suppose that there is such an algorithm $A$.
We have hat $A$ is a lock-free strongly-linearizable queue in a system where test\&set, swap and fetch\&add
are readable, by Lemma~\ref{lemma-strong-readable},
and hence one can solve consensus among three or more processes using base objects of type test\&set, swap and fetch\&add,
by Lemma~\ref{lemma-consensus}.
But this contradicts Corollary~\ref{coro-no-cons-read-2-RMW}, as consensus among three or more processes
is impossible in this system.
\end{proof}

We consider now systems where base objects are of type $2$-process {\sf test\&set}.
We also consider systems where the object is readable.
It has been shown~\cite{HerlihyR94} that there is no algorithm
that solves $k$-set agreement using  $2$-process consensus objects whenever $n > 2k$.
Since $2$-process test\&set and $2$-process consensus are equivalent~\cite{H91},
this impossibility result extends to $2$-process test\&set.

\begin{corollary}
\label{coro-impossibility-k-set-from-2-ts}
If $n > 2k$, it is impossible to solve $n$-process $k$-set agreement using $2$-process test\&set.
\end{corollary}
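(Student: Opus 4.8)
The plan is to reduce this impossibility to the known impossibility of $k$-set agreement from $2$-process consensus objects. The first ingredient is the result of Herlihy and Rajsbaum~\cite{HerlihyR94} recalled just above: no algorithm solves $n$-process $k$-set agreement using $2$-process consensus objects (together with read/write registers) whenever $n > 2k$. The second ingredient is the equivalence between $2$-process {\sf test\&set} and $2$-process consensus from~\cite{H91}: each is wait-free linearizably implementable from the other in the presence of read/write registers. For this corollary only one direction is needed, namely that a $2$-process {\sf test\&set} object can be wait-free linearizably implemented from $2$-process consensus objects and registers; this follows from the universality of consensus, since {\sf test\&set} is an object shared among (at most) two processes.

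The argument itself is by contradiction. Suppose $A$ solves $n$-process $k$-set agreement using only $2$-process {\sf test\&set} objects and read/write registers, for some $n > 2k$. I would construct an algorithm $A'$ by replacing every $2$-process {\sf test\&set} object used in $A$ with a wait-free linearizable implementation of it from $2$-process consensus objects and registers, as provided by the equivalence above. The resulting $A'$ uses only $2$-process consensus objects and registers. It inherits the termination, validity, and $k$-agreement guarantees of $A$: by the composability (locality) of linearizability, substituting each base object by a wait-free linearizable implementation of the same type produces an algorithm whose behaviours are among those of $A$, so validity and $k$-agreement (safety) are preserved, while wait-freedom of the substituted implementations preserves termination. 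Hence $A'$ solves $n$-process $k$-set agreement from $2$-process consensus with $n > 2k$, contradicting~\cite{HerlihyR94}.

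The only step requiring care is the correctness of the substitution, i.e.\ that composing $A$ with linearizable implementations of its base objects creates no new behaviours that could violate $k$-agreement; this is precisely the composability property of linearizability and is standard, so I expect no genuine obstacle here. Finally, since a readable $2$-process {\sf test\&set} also has consensus number $2$ and is therefore likewise implementable from $2$-process consensus and registers, the same reduction applies verbatim when the base {\sf test\&set} objects are readable, covering both variants considered in this section.
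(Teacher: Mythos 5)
Your proposal is correct and follows essentially the same route as the paper, which justifies the corollary in the two sentences preceding it: the Herlihy--Rajsbaum impossibility of $k$-set agreement from $2$-process consensus when $n > 2k$, combined with the equivalence of $2$-process {\sf test\&set} and $2$-process consensus from Herlihy's hierarchy paper. You correctly identify the one direction of the equivalence that is actually needed (implementing $2$-process {\sf test\&set} from $2$-process consensus via universality) and make explicit the standard substitution/composability argument that the paper leaves implicit.
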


We can prove the impossibility of lock-free strongly-linearizable $k$-ordering objects
from $2$-process test\&set, whenever $n > 2k$.

\begin{theorem}
If $n > 2k$, there is no lock-free strongly-linearizable implementation
of a $k$-ordering object using $2$-process test\&set.
\end{theorem}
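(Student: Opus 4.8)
The plan is to mirror exactly the structure of the preceding theorem about $1$-agreement objects, replacing the consensus impossibility with the $k$-set agreement impossibility. The final statement claims that for $n > 2k$ there is no lock-free strongly-linearizable implementation of a $k$-ordering object using $2$-process \textsf{test\&set}. This is the natural generalization of the earlier theorem (which handled the case $k=1$ with \textsf{test\&set}, swap and \textsf{fetch\&add}), so I would prove it by the same contradiction argument, now invoking Corollary~\ref{coro-impossibility-k-set-from-2-ts} in place of Corollary~\ref{coro-no-cons-read-2-RMW}.

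First I would suppose, for contradiction, that such a lock-free strongly-linearizable implementation $A$ of a $k$-ordering object exists using (non-readable) $2$-process \textsf{test\&set}. The first step is to pass to the readable model: by Lemma~\ref{lemma-strong-readable}, $A$ remains a lock-free strongly-linearizable implementation of the same $k$-ordering object when every base \textsf{test\&set} object is made readable (with reads returning the current state), preserving its progress property. This converts the assumed implementation into one that satisfies the hypotheses of Lemma~\ref{lemma-consensus}.

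Next I would apply Lemma~\ref{lemma-consensus} to the readable version of $A$: since $A$ is a lock-free strongly-linearizable $n$-process implementation of a $k$-ordering object over readable base objects, a single instance of $A$ lets $n$ processes solve $k$-set agreement in the same system, i.e., using readable $2$-process \textsf{test\&set}. But reading does not add power here: a readable $2$-process \textsf{test\&set} object can itself be simulated, or more simply, the $k$-set agreement impossibility of Corollary~\ref{coro-impossibility-k-set-from-2-ts} for $n > 2k$ applies to $2$-process \textsf{test\&set} and, as noted just before that corollary, extends to its readable variant. This yields a $k$-set agreement algorithm for $n > 2k$ processes from $2$-process \textsf{test\&set}, contradicting Corollary~\ref{coro-impossibility-k-set-from-2-ts}.

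Since every ingredient is already established in the excerpt, there is no genuine obstacle; the proof is essentially a one-paragraph assembly of Lemma~\ref{lemma-strong-readable}, Lemma~\ref{lemma-consensus}, and Corollary~\ref{coro-impossibility-k-set-from-2-ts}. The only point deserving a sentence of care is confirming that the readability assumption in Lemma~\ref{lemma-consensus} is compatible with the impossibility result being stated for plain $2$-process \textsf{test\&set}; I would handle this exactly as the $k=1$ proof does, by remarking that the impossibility is stated (or straightforwardly extends) to the readable objects, so that the contradiction is with solving $k$-set agreement in precisely the system Lemma~\ref{lemma-consensus} produces.
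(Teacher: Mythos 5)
Your overall skeleton matches the paper's proof exactly: assume such an implementation $A$ exists, apply Lemma~\ref{lemma-strong-readable} to move to readable $2$-process test\&set, apply Lemma~\ref{lemma-consensus} to obtain a $k$-set agreement algorithm $B$, and derive a contradiction with Corollary~\ref{coro-impossibility-k-set-from-2-ts}. The one place you are too quick is the bridge from \emph{readable} $2$-process test\&set back to \emph{plain} $2$-process test\&set, and this is precisely where this proof differs from the $k=1$ case you propose to imitate. In the $k=1$ case the impossibility (Corollary~\ref{coro-no-cons-read-2-RMW}) is already stated for readable base objects, because Herlihy's interfering-objects theorem covers them; Corollary~\ref{coro-impossibility-k-set-from-2-ts}, by contrast, is stated only for plain $2$-process test\&set, and the paper never claims it ``extends to the readable variant.'' So your remark that the readable impossibility is ``noted just before that corollary'' misreads the text, and handling the step ``exactly as the $k=1$ proof does'' would leave a hole.

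The paper closes this gap constructively, along the lines of your parenthetical ``can itself be simulated'': it replaces every readable base object of $A$ with the wait-free strongly-linearizable implementation of readable test\&set from plain test\&set given by Theorem~\ref{thm-from-ts-to-read-ts}, and then argues that $B$ remains correct for two reasons that need to be said explicitly: (i) $B$'s correctness rests on the strong linearizability of $A$, which is preserved because strongly-linearizable implementations compose; and (ii) $B$'s termination is preserved because the substituted implementations are wait-free. Adding that substitution step, with those two justifications, turns your sketch into the paper's proof.
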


\begin{proof}
By contradiction, suppose there is such an implementation $A$.
By Lemma~\ref{lemma-strong-readable}, $A$ is a lock-free strongly-linearizable implementation of
the same object using readable using $2$-process test\&set.
By Lemma~\ref{lemma-consensus}, we know that there is an algorithm $B$ that solves $k$-set agreement
using $A$. Hence $k$-set agreement is solvable using readable $2$-process test\&set.
If we replace the base objects in $A$
with the wait-free strong linearizable implementations of Theorem~\ref{thm-from-ts-to-read-ts},
$B$ remains correct because, first, $B$ heavily relies on the strong linearizability
of $A$, which is preserved as the implementations are strongly linearizable, and second,
termination of $B$ is preserved too because the implementations are wait-free.
Thus, $k$-set agreement is solvable using $2$-process test\&set..
But this contradicts Corollary~\ref{coro-impossibility-k-set-from-2-ts}.
\end{proof}

\section{Discussion}

We have studied whether primitives with consensus number 2
allow to obtain strongly-linearizable implementations of objects.
Naturally, we concentrate on objects with consensus number 1 or 2,
and show that for many of them,
there are wait-free strongly-linearizable implementations
from {\sf test\&set} and {\sf fetch\&add}.
We also prove that even
when {\sf fetch\&add}, {\sf swap} and {\sf test\&set} primitives are used,
some objects with consensus number 2,
like queues and stacks, and even their relaxed variants,
do not have lock-free strongly-linearizable implementations.

Our results indicate several intriguing research directions.
Immediate questions are to complete the picture for other objects
with consensus number 2, e.g., to find
a wait-free strongly-linearizable implementation of {\sf fetch\&inc}
from {\sf test\&set},
and wait-free, or even lock-free, strongly-linearizable implementations
of {\sf fetch\&add} or {\sf swap} from {\sf test\&set}.
Note that analogous linearizable implementations
exist~\cite{AGM07,AfekMW2011,Afek1999W,AfekWW1993,Li01},
but they are intricate and not strongly linearizable.
Moreover, we would like to find a characterization
of the objects (for each consensus number) that have
strongly-linearizable implementation from primitives with consensus number 2.
This characterization does not contain the class \emph{Common2},
since a stack is in Common2~\cite{AGM07},
but we show it does not have a wait-free strongly-linearizable
implementation from primitives with consensus number 2.

Our implementations using {\sf fetch\&add} store extremely large
values in a single variable.
It is interesting for find an implementation that uses smaller
variable, e.g., with only $O(\log n)$ bits.
One way is to find a strongly-linearizable implementation of \emph{wide}
{\sf fetch\&add} objects from narrow {\sf fetch\&add} objects,
or to show that such an implementation does not exist.

Our impossibility proofs go through a reduction, showing that
strongly-linearizable (and lock-free) implementations allow to
solve agreement problems.
Can such reductions be used to prove the impossibility
of strongly-linearizable implementations of objects with consensus number 1
from primitives with consensus number 1,
like the results of~\cite{DenysyukW2015,HelmiHW2012,ChanHHT2021}?
Such proofs by reduction would provide more insight into the reasons for
the difficulty of achieving strong linearizability.

\begin{acks}
Hagit Attiya is partially supported by the Israel Science Foundation (grant number 22/1425).
Armando Castañeda is supported by the DGAPA PAPIIT project IN108723.
Constantin Enea is partially supported by the project SCEPROOF founded by the French ANR Agency and the NSF Agency from USA.
\end{acks}

\bibliographystyle{abbrv}
\bibliography{biblio}

\end{document}